\documentclass[12pt,a4paper]{article}
\usepackage{fancyhdr}
\usepackage{amsmath}
\usepackage{amsthm,amssymb}
\usepackage{hyperref}
\usepackage{graphicx}
\pagestyle{fancy}
\rhead{}
\lhead{}
\chead{}
\fancypagestyle{firststyle}
{
	\fancyhead[C]{}
}
\title{A Step by Step Mathematical Derivation and Tutorial on Kalman Filters}
\author{Hamed Masnadi-Shirazi \and Alireza Masnadi-Shirazi \and  Mohammad-Amir Dastgheib}
\newtheorem{theorem}{Theorem}[section]
\newtheorem{corollary}{Corollary}[section]
\newtheoremstyle{example}{1cm}{1cm}{\normalfont}{}{\bfseries}{.}{}{}

\begin{document}

\maketitle

\thispagestyle{firststyle}
\begin{abstract}
	We present a step by step mathematical derivation of the Kalman filter using two different approaches. First, we consider the orthogonal projection method  by means of vector-space optimization. Second, we derive the Kalman filter using Bayesian optimal filtering. We provide detailed proofs for both methods and each equation is  expanded in detail.	
\end{abstract}

\section{Introduction}

The Kalman filter, named after Rudolf E. Kalman, is still a highly useful algorithm today despite having been introduced more than 50 years ago. Its success can be attributed to it being an optimal estimator  and its relatively straightforward and easy to implement recursive algorithm with small computational cost \cite{OptimalF}. 

The Kalman filter has been used in various applications such as smoothing noisy data and providing estimates of parameters of interest,  phase-locked loops in radio equipment, smoothing the output from laptop track pads, global positioning system receivers, and many others \cite{Faragher}. 
	
	

	The Kalman filter \cite{kalman}, also known as the Kalman-Bucy filter \cite{Kalman-bucy}, can be summarized as an iterative prediction-correction process. 
	It can also be seen as a time variant Wiener filter \cite{OptimalF} and was originally derived using the orthogonal projection method. 
	The innovations approach \cite{paoulis} was developed in the late 1960s using martingales theory \cite{innov}, \cite{nonlin}. 
	
	In the first part of this article the orthogonal projection method is used to derive the Kalman filter as a minimum mean squared estimator. 
	The derivation is an expansion of the analysis presented in \cite{Vspace}, such that each step of the proof is clearly derived and presented with complete details.
	
	
	The Kalman filter has a  Bayesian interpretation as well \cite{Bayesian}, \cite{Bayesian2} and can be derived within a Bayesian framework as a MAP estimator.
	The second part of this article uses Bayesian optimal filtering to derive the same equations.
	

\pagestyle{plain}
\newpage
\tableofcontents
\newpage
\section{Model of a Random Process}
	Consider that we have a target state vector $x_k\in \mathbb{R}^{n}$, where $k$ is the time index. The target space evolves according to the discrete time stochastic model:
	\[x_k=\phi_{k-1}(x_{k-1},u_{k-1})\]
	$\phi_{k-1}$ is a known , possibly nonlinear function of state $x_{k-1}$ and $u_{k-1}$ is the noise which counts e.g. for mis-modeling or disturbances in target motion.
	
	Also consider that the measurements of the process (picked up by the sensor for example) are $z_k\in\mathbb{R}^m$. the measurements and states are related by
	\[z_k=h_k(x_k,w_k)\]
	where $h_k$ is a known, possibly nonlinear function and $w_k$ is the measurement noise.
	
	$w_k$ and $u_{k-1}$ are assumed to be white with known probability distribution functions and independent of each other.
	
	Filtering is an operation that involves extraction of information about a quantity of interest $x_k$ at (discrete) time $k$ by using data measured up to and including time $k$. Therefore, the objective of filtering is to recursively estimate $x_k$ (target state) from the measurements $z_k$. 
	
	For the special case where $\phi_k$ and $h_k$ are linear functions and the distribution of noise and initial states are Gaussian, the $n$-dimensional dynamic model of a random process reduces to the following linear/Gaussian model and  consists of the following three parts:
	\begin{enumerate}
		\item A vector with difference equation \[x_{k+1}=\Phi_{k}x_{k}+u_{k}\qquad k=0,1,2,\ldots\] which defines how the random vector $x_{k}$ changes with time.
		
		\begin{itemize}
			\item Here $x_{k}$ is an $n$-dimensional state vector where each component is a random variable.
			\item $\Phi_{k}$ is a known $n\times n$  matrix.
			\item $u_{k}$ is an $n$-dim random vector of input with zero mean and there is zero correlation between present noise at the time $k$ and past noise at time $l$, i.e:
			\[ E[u_{k}u_{l}']=Q_{k}\delta_{kl}=\left\lbrace\begin{array}{ll}
			Q_{k}&\qquad k=l\\
			0&\qquad k\neq l
			\end{array}\right. \]
			where $Q_{k}>0$ is a positive definite matrix.
		\end{itemize}
		\item An initial random vector $x_{0}$ and initial random estimate $ \hat{x}_{0} $ with initial error covariance $ E[(x_{0}-\hat{x}_{0})(x_{0}-\hat{x}_{0})']=P_{0}$ 
		\item Measurements of the process is of the form 
		\[ z_{k}=H_{k}x_{k}+w_{k}\qquad k=0,1,2,\ldots \]
		which defines how the measurements $z_{k}$ of the process $x_{k}$ are measured over time.
		\begin{itemize}
			\item Here $H_{k}$ is a known $m\times n$ matrix
			\item $w_{k}$ is an $n$-dimensional random measurement error with zero mean and 
			\[ E[w_{k}w_{l}']=R_{k}\delta_{kl}=\left\lbrace\begin{array}{ll}
			R_{k}&\qquad k=l\\
			0&\qquad k\neq l
			\end{array}\right. \] 
			where $R_{k}>0$ is a positive definite matrix.
		\end{itemize}
		
		It is assumed that $x_{0}$,$u_{j}$,$w_{k}$ are all uncorrelated for $j\geq0,k\geq0$.
	\end{enumerate}

\part{Derivation Using Vector Space Methods}
\section{Hilbert Space of Random Vectors}
\subsection{A Review of Probability}
For a  real valued random variable $x$, we define the probability distribution $P$ of $x$ by
\[ P(\zeta)=\textrm{Prob}(x\leq\zeta). \] 
In other words, $P(\zeta)$ is the probability that the random variable $x$ assumes a value less than or equal to the number $\zeta$. 
For a finite collection of real random variables $\{ x_1, x_2,..., x_n\}$,their joint probability distribution $P$ shows their inter-dependencies
and is defined as 
\[ P(\zeta_1,\zeta_2,\ldots,\zeta_n)=\textrm{Prob}(x_1\leq\zeta_1,x_2\leq\zeta_2,\ldots,x_n\leq\zeta_n). \]

It is often useful to characterize a random variable by its mean and variance. Therefore the following quantities are of primary interest. 
\begin{eqnarray}
&{E[x]} \hspace{40pt} \textrm{ is the \textit{expected value} of } x.\nonumber \\
&{E[x^2]}\hspace{40pt} \textrm{ is the \textit{average power} of } x. \nonumber\\	
&{E[(x-E(x))^2]} \hspace{25pt} \textrm{ is the \textit{variance} of } x.\nonumber
\end{eqnarray} 
Note that the mathematical expectation operator $E[x]$ is a linear operator.

If $g(.)$ is a single valued function then $g(x)$ is also a random variable and its expected value is  defined as 
\[ E[g(x)]=\int_{-\infty}^{\infty} g(\zeta) dP(\zeta), \]
which may in general not be finite. 
Also, the expected value of any function $g(.)$ over a collection of random variables $\{ x_1, x_2,..., x_n\}$ is defined as
\[
E[g(x_1,x_2,\ldots,x_n)]=\int_{-\infty}^{\infty}\cdots\int_{-\infty}^{\infty} g(\zeta_1,\zeta_2,\ldots,\zeta_n) dP(\zeta_1,\zeta_2,\ldots,\zeta_n). 
 \]
 
 The second-order statistical averages of these variables can be 
 described in terms of expected values.
Specifically, for the  $n\times n$ \textit{covariance matrix}  $\textrm{\textit{cov}}(x_1,x_2,\ldots,x_n)$,  its $ij$-th element is 
defined as 
\[ 
E[(x_i-E(x_i))(x_j-E(x_j))]=E(x_ix_j)-E(x_i)E(x_j),
 \]
 which in case of zero means reduces to $E(x_ix_j)$.\\ 
 
Finally, if $E(x_ix_j)=E(x_i)E(x_j)$ then $E[(x_i-E(x_i))(x_j-E(x_j))]=0$ and we say that $x_i$ and $x_j$ are uncorrelated. 

 \subsection{Random vectors}
 The idea of random variables can be generalized to random vectors. An $n$-dimensional random vector $x$  is an ordered set of $n$ random values $x_i$ and is defined as
 \[ x=\left[
 \begin{array}{c}
 x_1 \\
 x_2 \\
 \vdots\\
 x_n
 \end{array}
 \right] \]
 
 Let $\{y_1,y_2,\ldots,y_m\}$ be $n$-dimensional random vectors of the above form then a Hilbert space $\mathcal{H}$ can be defined such that $\mathcal{H}$ consists of all vectors whose components are  linear combination of the $y_i$'s.\\
 If $x$ and $y$ are elements of $\mathcal{H}$, we define their inner product as
 \[ (x|y)=E(x^Ty)=E(\sum_{i=1}^{n}x_iy_i). \]
 The induced norm of a vector $x$ in this space can be written as
 \begin{eqnarray}
 \|x\|&=&\sqrt{E(x^Tx)}=\sqrt{E(x_1^2+x_2^2+\ldots+x_n^2)}\\ 
&=&\sqrt{E(x_1^2)+E(x_2^2)+\ldots+E(x_n^2)} \\
&=&\{\mathrm{Trace}(E(xx^T))\}^\frac{1}{2},
\end{eqnarray}
 since the expected value of the random matrix $xx^T$ is 
 \[
 E(xx^T)= \left[\begin{array}{cccc}
 E(x_1x_1) & E(x_1x_2) & \cdots & E(x_1x_n)\\
 E(x_2x_1) & E(x_2x_2) & \cdots & E(x_1x_n)\\
 \vdots & \vdots& \ddots& \vdots\\
 E(x_nx_1) & E(x_nx_2) & \cdots & E(x_nx_n)
 \end{array}\right].
 \]
 Similarly the inner product can also be written as
\[ (x|y)=\textrm{Trace}(E(xy^T)). \]

Two vectors are said to be orthogonal if $(x|y)=0$ and this can be written as $x\perp y$. If $x$ and $y$ are uncorrelated and $E(x)=E(y)=0$ then $x$ and $y$ are orthogonal to each other since
\[ (x|y)=E(x^Ty)=E(x)E(y)=0 \Rightarrow x\perp y. \]

Finally, the covariance matrix for a random vector is defined as 
\[\textrm{cov}(x)=E[(x-E(x))(x-E(x))^T]. \]
If $E(x)=0$ then  the covariance matrix can be written as
\[ E(xx^T).\]

\section{Minimum Variance Unbiased (Gauss-Markov) Estimate}
\subsection{Problem Setup}
Assume that we have observations of a variable $y=\left[\begin{array}{c}y_1\\y_2\\\vdots\\y_m
\end{array}\right]$ and  that $y$ is a linear estimation of other variables $W_{m\times n}$ plus an error term to account for measurement errors such that
\[ y_i=\beta_1w_{i1}+\beta_2w_{i2}+\ldots+\beta_nw_{in}+\varepsilon_i.\]
We can therefore write
\[ y=W\beta+\varepsilon, \]
where $y$ is the known outcome of $m$ inexact measurements, $W$ is a known matrix, $\beta$ is an unknown vector of parameters and $\epsilon$ is a random vector
such that 
\[ E(\varepsilon_i)=0, \]
\[ E(\varepsilon\varepsilon')=Q>0, \]
where $Q$ is a positive definite matrix.
Assuming $W$ and $y$ are known we want to estimate the unknown $\beta$. We seek a linear estimate $\hat{\beta}=Ky$, where $K_{m\times n}$ is an unknown constant.

Since $\varepsilon$ is a random vector,  $y=W\beta+\varepsilon$ is  a random vector and since $\hat{\beta}=Ky$, $\hat{\beta}$ is also a random vector. As a result the estimation error defined by $\textrm{error}=\hat{\beta}-\beta$ is a random vector as well.

We consider the optimality criterion of minimizing the norm of the error in order to  find $\hat{\beta}$. Since error is a random vector, the norm is defined as
\begin{align}
 \|\textrm{error}\|^2=& E[(\hat{\beta}-\beta)^T(\hat{\beta}-\beta)]=E[(Ky-\beta)^T(Ky-\beta)] \nonumber\\
 =& E[(K(W\beta+\varepsilon)-\beta)^T(K(W\beta+\varepsilon)-\beta)] \nonumber\\
 =& E[(KW\beta+K\varepsilon-\beta)^T(KW\beta+K\varepsilon-\beta)]\nonumber\\
 =& E[((KW\beta)^T+(K\varepsilon)^T-\beta^T)(KW\beta+K\varepsilon-\beta)]  \nonumber
 \end{align}
 after multiplying out
 \begin{align}
 = E[&\underline{(KW\beta)^T(KW\beta)}+( KW\beta)^T(K\varepsilon) -\underline{(KW\beta)^T\beta}\nonumber\\
 +&(K\varepsilon)^T(KW\beta)+( K\varepsilon)^T(K\varepsilon )-(K\varepsilon)^T\beta\nonumber\\
 -&\underline{\beta^T(KW\beta)}-\beta^T(K\varepsilon)+\underline{\beta^T\beta}] \nonumber
 \end{align}
 separating the underlined terms
\begin{align}
= E[&(KW\beta)^T(KW\beta)+\beta^T\beta-(KW\beta)^T\beta-\beta^T(KW\beta)]\nonumber\\
+ E[& \beta^TW^TK^TK\varepsilon+\varepsilon^TK^TKW\beta +(K\varepsilon)^T(K\varepsilon)
-\varepsilon^TK^T\beta-\beta^TK\varepsilon] \nonumber
\end{align}
and moving constant terms out of the expectation
\begin{align}
=&(KW\beta)^T(KW\beta)+\beta^T\beta-(KW\beta)^T\beta-\beta^T(KW\beta)\nonumber\\
+&\beta^TW^TK^TKE[\varepsilon]+E[\varepsilon^T]K^TKW\beta +E[(K\varepsilon)^T(K\varepsilon)]
-E[\varepsilon^T]K^T\beta-\beta^TKE[\varepsilon]. \nonumber
\end{align}
Since the expected value of $\varepsilon$ is zero then $E[\varepsilon]=E[\varepsilon^T]=0$ and
\begin{align}
\|\textrm{error}\|^2 =& \|KW\beta-\beta\|^2+E[(K\varepsilon)^T(K\varepsilon)] \nonumber\\
=& \|KW\beta-\beta\|^2+\textrm{Trace}(E[K\varepsilon(K\varepsilon)^T])\nonumber\\
=& \|KW\beta-\beta\|^2+\textrm{Trace}(E[K\varepsilon\varepsilon^TK^T])\nonumber\\
=& \|KW\beta-\beta\|^2+\textrm{Trace}(KE[\varepsilon\varepsilon^T]K^T)\nonumber\\
=& \|KW\beta-\beta\|^2+\textrm{Trace}(KQK^T).\nonumber
\end{align}

Note that we are trying to find the unknown $K$ such that it minimizes the error.  Yet, the error, in the expression above, is also a function of the unknown $\beta$.
If $KW=I$ then the error expression is independent of $\beta$ since
\[ \|KW\beta-\beta\|=\|I\beta-\beta\|=0. \]
The problem can now be written as
\[ \begin{array}{cccc}
 \arg\min_{\hat{\beta}} & \|\hat{\beta}-\beta\|^2  & =  \arg\min_{K} & \textrm{Trace}(KQK')\\
 &\text{s.t. } KW=I &  &\text{s.t. } KW=I
\end{array}\]
which is independent of $\beta$.

\subsection{What does imposing $KW=I$ mean?}
We define the estimate $\hat{\beta}$ of an operator $\beta$ to be unbiased if $E[\hat{\beta}]=\beta$.
If we impose $KW=I$ then we can write 
\[ E[\hat{\beta}]=e[Ky]=E[KW\beta+K\varepsilon]=E[KW\beta]+E[K\varepsilon]=\underbrace{KW}_I\beta=\beta. \]
Therefore imposing $KW=I$ is equivalent to requiring that $\hat{\beta}$ be an unbiased estimate of $\beta$. 
In summary, we are trying to find the unbiased linear estimate of $\beta$ that minimizes $\|\hat{\beta}-\beta\|^2$.

\subsection{Solution to the problem}
The minimization problem above can be written in terms of the elements of $\hat{\beta}$ as
\[ \begin{array}{cl}
\arg\min_{\hat{\beta}} & \sum_{i=1}^{n}E[(\hat{\beta}_i-\beta_i)^2]\\
\text{s.t. } & E[\hat{\beta}_i]=\beta_i \, ,i=1,2,\ldots,n\\
& \hat{\beta}_i=k_i'y \,  ,i=1,2,\ldots,n  
\end{array}\]
Where $k_i'$ is the $i$th row of the matrix $K$.

Since $E[(\hat{\beta}_i-\beta_i)^2]>0$, every term in the above summation is nonnegative, therefore the sum $\sum_{i=1}^{n}E[(\hat{\beta}_i-\beta_i)^2]$ is minimum when each term ,$E[(\hat{\beta}_i-\beta_i)^2] $, is minimized. So we can solve $n$ separate problems, one for each $\beta_i$ as

\[ \begin{array}{cl}
\arg\min_{\hat{\beta_i}} & E[(\hat{\beta}_i-\beta_i)^2]\\
\text{s.t. } & E[\hat{\beta}_i]=\beta_i \, ,i=1,2,\ldots,n\\
& \hat{\beta}_i=k_i^Ty \,  ,i=1,2,\ldots,n  
\end{array}\]

We can also write the problem as finding the optimal matrix $K$
\[ \begin{array}{cl}
\arg\min_{K} & \textrm{Trace}(KQK^T)\\
\text{s.t. } & KW=I
\end{array}\]
This can be thought of as a minimum weighted norm problem in the space of matrices, or it can  also be decomposed into $n$ separate problems where the $i$th problem is 
\[ \begin{array}{cl}
\arg\min_{k_i} & k_i^TQk_i\\
\text{s.t. } & k_i^Tw_j=\delta_{ij} \,,\, i,j=1,2,\ldots,n
\end{array}\]
 where $w_j$ is the $j$th column of $W$, $k_i$ is the $i$th row of $K$ and $\delta_{ij}$ is the Kronecker delta function defined as
\[\delta_{ij}= \left\lbrace\begin{array}{cl}
0 & i\neq j\\
1 & i=j
\end{array}\right.\]
Defining the weighted inner product as $(x|y)_Q=x^TQy$ and noting that $ (k_i|Q^{-1}w_j)_Q=k_i^TQQ^{-1}w_j=k_i^Tw_j$ the above problem can be written as
\[ \begin{array}{cl}
\arg\min_{k_i} & (k_i|k_i)_Q\\
\text{s.t. } & (k_i|Q^{-1}w_j)_Q=\delta_{ij}
\end{array}\]

This is in the form of the standard minimum norm problem and can be rewritten as 
 \[ \begin{array}{clcll}
\arg\min_{k_i} & \|k_i\|^2_Q &\equiv &\arg\min_{k_i} & \|k_i\|^2_Q\\
\text{s.t. } & \begin{array}{c}
k_i^Tw_1=0 \\
\vdots\\
k_i^Tw_i=1\\
\vdots\\
k_i^Tw_n=0
\end{array}
& &\text{s.t. } &\left[\begin{array}{c}\textendash w_1\textendash \\
\vdots\\
\textendash w_i\textendash \\
\vdots\\
\textendash w_n\textendash \end{array}\right]
\left[\begin{array}{c}| \\
k_i\\
|\end{array}\right]=
\left[\begin{array}{c} 0 \\
\vdots\\
1\\
\vdots\\
0 \end{array}\right]=e_i
\end{array}\] 

The above problem can be summarized as 
\[ \begin{array}{cl}
\arg\min_{k_i} &  \|k_i\|_Q^2 \\
\text{s.t. } & W^Tki=e_i
\end{array}\]
Assuming that $W$ is full column rank then  $W^T$ is full row rank and the least squares solution  is 
\[ k_i=Q^{-1}W(W^TQ^{-1}W)^{-1}e_i. \]
We can now find  $K^T$ by combining all the $k_i$'s as
\[ K^T=Q^{-1}W(W^TQ^{-1}W)^{-1} \]
and write the final solution $\hat{\beta}$ as
\[  \hat{\beta}=Ky=(W^TQ^{-1}W)^{-1}W^TQ^{-1}y. \] 
Here, we also compute the error covariance matrix  as
\begin{align}
E[(\hat{\beta}-\beta)(\hat{\beta}-\beta)^T]=& E[(Ky-\beta)(Ky-\beta)^T] = E[(Ky-\beta)(y^TK^T-\beta^T)]\nonumber\\
=& E[(KW\beta+K\varepsilon-\beta)(\beta^TW^TK^T+\varepsilon^TK^T-\beta^T)\nonumber\\
=& E[KW\beta\beta^TW^TK^T+KW\beta\varepsilon^TK^T-KW\beta\beta^T+K\varepsilon\beta^TW^TK^T\nonumber\\
+&K\varepsilon\varepsilon^TK^T-K\varepsilon\beta^T-\beta\beta^TW^TK^T-\beta\varepsilon^TK^T+\beta\beta^T]\nonumber\\
\stackrel{E[\varepsilon]=0}{=}& E[(KW\beta-\beta)(KW\beta-\beta)^T]+K\underbrace{E[\varepsilon\varepsilon^T]}_{Q}K^T\nonumber
\end{align}
 inserting $K=(W^TQ^{-1}W)^{-1}W^TQ^{-1}$ 
\[\begin{array}{l}
E[(\hat{\beta}-\beta)(\hat{\beta}-\beta)^T]=\\
=E[(\underbrace{(W^TQ^{-1}W)^{-1}}_{A^{-1}}\underbrace{W^TQ^{-1}W}_{A}\beta-\beta)(\underbrace{(W^TQ^{-1}W)^{-1}}_{A^{-1}}\underbrace{W^TQ^{-1}W}_{A}\beta-\beta)^T]+KQK^T\\
=\underbrace{E[(\beta-\beta)(\beta-\beta)^T]}_0+(W^TQ^{-1}W)^{-1}\underbrace{W^TQ^{-1}\underbrace{QQ^{-1}}_IW}_A\underbrace{(W^TQ^{-1}W)^{-1}}_{A^{-1}}\\
=(W^TQ^{-1}W)^{-1}
\end{array} \] 

\section{Minimum Variance Estimate}
In the previous discussion $\beta$ was assumed to be unknown and could take any value from $-\infty$ to $+\infty$. We had no prior knowledge about its values. If we have prior knowledge, such as $\beta$'s mean or covariance, then this prior info can be used to produce an estimate with lower error variance compared to the minimum variance unbiased estimate.

So, we assume that $y=W\beta+\varepsilon$ but in this case both $\varepsilon$ and $\beta$ are random vectors. We again want to find $\hat{\beta}$ such that we minimize the norm of the error.

\subsection{Minimum Variance Estimate Theorem}
\begin{theorem}\label{thm:Minimum Variance}
	Let $y$ and $\beta$ be random vectors. Assume that $[E[yy^T]]^{-1}$ exists. The linear estimate $\hat{\beta}$ of $\beta$ based on $y$ that minimizes $\|\hat{\beta}-\beta\|^2$ is
	\[ \hat{\beta}=E[\beta y^T][E[yy^T]]^{-1}y, \]
	with corresponding error covariance matrix
	\[ E[(\hat{\beta}-\beta)(\hat{\beta}-\beta)^T]=E[\beta\beta^T]-E[\hat{\beta}\hat{\beta}^T]=E[\beta\beta^T]-E[\beta y^T][E[yy^T]]^{-1}E[y\hat{\beta}].\]
\end{theorem}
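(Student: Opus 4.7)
My plan is to parametrize the linear estimate as $\hat{\beta} = Ky$ for an unknown constant matrix $K$ and then minimize $\|\hat{\beta}-\beta\|^2$ over $K$. The cleanest route is to invoke the orthogonality principle from the Hilbert space of random vectors developed earlier: among all elements of the subspace of linear functions of $y$, the closest one to $\beta$ in the induced norm is characterized by the residual $\hat{\beta}-\beta$ being orthogonal to every component of $y$, i.e.\ $E[(\hat{\beta}-\beta)y^T] = 0$. Substituting $\hat{\beta}=Ky$ yields $K\,E[yy^T] = E[\beta y^T]$, and inverting $E[yy^T]$ (which is assumed to exist) gives $K = E[\beta y^T][E[yy^T]]^{-1}$ as claimed.

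If one wishes to avoid invoking the projection theorem directly, an equivalent algebraic route is to expand the squared norm as a trace,
\[ \|\hat{\beta}-\beta\|^2 = \mathrm{Trace}\bigl(K E[yy^T]K^T - K E[y\beta^T] - E[\beta y^T]K^T + E[\beta\beta^T]\bigr), \]
and complete the square in $K$ with respect to the positive-definite form $E[yy^T]$. This makes the minimizing $K$ manifest without further machinery. I would lead with the orthogonality argument since the Hilbert space framework has just been introduced, but cross-check via the algebraic route to confirm that the critical point is genuinely a minimum.

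For the covariance, I would substitute $\hat{\beta}=Ky$ into $E[(\hat{\beta}-\beta)(\hat{\beta}-\beta)^T]$ and expand to
\[ E[\beta\beta^T] - E[\beta y^T]K^T - K E[y\beta^T] + K E[yy^T]K^T. \]
Plugging in the optimal $K$, the last term satisfies $K E[yy^T]K^T = K E[y\beta^T]$, which cancels one cross term and produces the first form $E[\beta\beta^T] - E[\beta y^T][E[yy^T]]^{-1}E[y\beta^T]$. For the equivalent expression in terms of $\hat{\beta}$, I would use orthogonality once more: since $\hat{\beta}-\beta \perp y$, we have $E[y\hat{\beta}^T] = E[y\beta^T]$ and hence $E[\hat{\beta}\hat{\beta}^T] = K E[yy^T]K^T = E[\beta y^T][E[yy^T]]^{-1}E[y\beta^T]$, delivering the advertised form $E[\beta\beta^T]-E[\hat{\beta}\hat{\beta}^T]$.

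The main obstacle I anticipate is nothing conceptual but pure book-keeping: keeping transposes straight, remembering to take a trace when treating the squared norm as a scalar built from a matrix, and verifying that the two apparently different expressions for the error covariance genuinely coincide via the orthogonality identity $E[\hat{\beta}\beta^T]=E[\hat{\beta}\hat{\beta}^T]$. Everything else reduces to linear algebra once the orthogonality condition is in hand.
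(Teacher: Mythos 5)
Your proposal is correct and follows essentially the same route as the paper: write $\hat{\beta}=Ky$, impose the orthogonality principle $E[(\hat{\beta}-\beta)y^T]=0$ to obtain the normal equations $KE[yy^T]=E[\beta y^T]$, and then expand $E[(\hat{\beta}-\beta)(\hat{\beta}-\beta)^T]$ and substitute the optimal $K$; the paper merely carries out the orthogonality argument row by row (one sub-problem per $\beta_i$) before assembling the same matrix equation. Your extra observation that $E[\hat{\beta}\hat{\beta}^T]=E[\beta y^T][E[yy^T]]^{-1}E[y\beta^T]$ via $E[y\hat{\beta}^T]=E[y\beta^T]$ is a welcome addition, since it justifies the $E[\beta\beta^T]-E[\hat{\beta}\hat{\beta}^T]$ form of the covariance that the paper states but does not explicitly verify.
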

\begin{proof}
	Similar to the previous problem, this problem decomposes into a separate problem for each $\beta_i$. There are no constraints so we find the best approximation 
	of  $\beta_i$ within the subspace generated by the $y_i$s.
	
	Writing the optimal estimate as $\hat{\beta}=Ky$ where $K_{m\times n}$, then the $i$th sub-problem  is equivalent to the problem of selecting the $i$th row of $K$, which in turn gives the optimal linear combination of $y_i$s that make $\beta_i$s.
	So each row of $K$ should satisfy the normal equations corresponding to projecting $\beta_i$ onto the $y_i$. Specifically, 
\[ Ky=\left[\begin{array}{c}
\textendash k_1^T \textendash\\
\vdots\\
\textendash k_i^T \textendash\\
\vdots\\
\textendash k_n^T \textendash\\
\end{array}\right]\left[\begin{array}{c}
y_1\\
\vdots\\
y_i\\
\vdots\\
y_n\\
\end{array}\right] 
=\left[\begin{array}{c}
\beta_1\\
\vdots\\
\beta_i\\
\vdots\\
\beta_n\\
\end{array}\right]=\beta\]
and  $\textrm{error}_i=\beta_i-\hat{\beta}_i=\beta_i-k_i^Ty$ should be orthogonal to each $y_j$ (orthogonality principle), consequently
\[
(\beta_i-k_i^Ty|y_j)=0\Rightarrow(\beta_i|y_j)-(k_i^Ty|y_j)=0\Rightarrow(k_i^Ty|y_j)=(\beta_i|y_j)
\]
\[ \Rightarrow\left\lbrace\begin{array}{c}
(k_i^Ty|y_1)=(\beta_i|y_1)\\
(k_i^Ty|y_2)=(\beta_i|y_2)\\
\vdots\\
(k_i^Ty|y_n)=(\beta_i|y_n)
\end{array}\right.\Rightarrow\left\lbrace\begin{array}{c}
(k_{i1}y_1+k_{i2}y_2+\ldots+k_{in}y_n|y_1)=(\beta_i|y_1)\\
(k_{i1}y_2+k_{i2}y_2+\ldots+k_{in}y_n|y_2)=(\beta_i|y_2)\\
\vdots\\
(k_{i1}y_1+k_{i2}y_2+\ldots+k_{in}y_n|y_n)=(\beta_i|y_n)
\end{array}\right. \]
\begin{alignat}{2}
 \Rightarrow&\left\lbrace\begin{array}{c}
k_{i1}(y_1|y_1)+k_{i2}(y_2|y_1)+\ldots+k_{in}(y_n|y_1)=(\beta_i|y_1)\\
k_{i1}(y_1|y_2)+k_{i2}(y_2|y_2)+\ldots+k_{in}(y_n|y_2)=(\beta_i|y_2)\\
\vdots\\
k_{i1}(y_1|y_n)+k_{i2}(y_2|y_n)+\ldots+k_{in}(y_n|y_n)=(\beta_i|y_n)
\end{array}\right. \nonumber\\ 
 \Rightarrow&\left\lbrace\begin{array}{c}
k_{i1}E[y_1y_1]+k_{i2}E[y_2y_1]+\ldots+k_{in}E[y_ny_1]=E[\beta_i|y_1]\\
k_{i1}E[y_1y_2]+k_{i2}E[y_2y_2]+\ldots+k_{in}E[y_ny_2]=E[\beta_i|y_2]\\
\vdots\\
k_{i1}E[y_1y_n]+k_{i2}E[y_2y_n]+\ldots+k_{in}E[y_ny_n]=E[\beta_i|y_n]
\end{array}\right. \nonumber
 \end{alignat} 
\[ \Rightarrow\underbrace{\left[\begin{array}{cccc}
E[y_1y_1]& E[y_2y_1]&\ldots& E[y_ny_1]\\
E[y_1y_2]& E[y_2y_2]&\ldots& E[y_ny_2]\\
\vdots & \ddots & & \vdots\\
E[y_1y_n]& E[y_2y_n]&\ldots& E[y_ny_n]
\end{array}\right]}_{\text{symmetric}}\left[\begin{array}{c}
k_{i1}\\
\vdots\\
k_{in}\\
\end{array}\right] =\left[\begin{array}{c}
E[\beta_iy_1]\\
\vdots\\
E[\beta_iy_n]\\
\end{array}\right]  \] 
\[ \Rightarrow\underbrace{\left[\begin{array}{cccc}
	E[y_1y_1]& \ldots& E[y_1y_n]\\
	E[y_2y_1]& \ldots& E[y_ny_2]\\
	\vdots & \ddots &  \vdots\\
	E[y_ny_1]& \ldots& E[y_ny_n]
	\end{array}\right]}_{E[yy^T]}\underbrace{\left[\begin{array}{c}
k_{i1}\\
\vdots\\
k_{in}\\
\end{array}\right]}_{k_i^T} =\left[\begin{array}{c}
E[\beta_iy_1]\\
\vdots\\
E[\beta_iy_n]\\
\end{array}\right]  \] 
we have these matrix equations for every $i$, which can all be combined and written as
\[ E[yy^T]K^T=E[y\beta^T]\Rightarrow K^T=[E[yy^T]]^{-1}E[y\beta^T] \]
\[ K=E[\beta y^T][E[yy^T]]^{-1}. \]
	
The error covariance matrix can now be written as
	\begin{align}
		E[(\hat{\beta}-\beta)(\hat{\beta}-\beta)^T]=& E[(Ky-\beta)(Ky-\beta)^T] = E[(Ky-\beta)(y^TK^T-\beta^T)]\nonumber\\
		=& E[Kyy^TK^T-Ky\beta^T-\beta y^TK^T+\beta\beta^T]\nonumber\\
		=& KE[yy^T]K^T-KE[y\beta^T]-E[\beta y^T]K^T+E[\beta\beta^T].\nonumber
	\end{align}
	Noting that
	\begin{align}
		K=E[\beta y^T][E[yy^T]]^{-1}\nonumber\\
		K^T=[E[yy^T]]^{-1}E[y \beta^T]\nonumber
	\end{align}
	and substituting  for $K$ and $K^T$ we find
	\begin{align}
	E[(\hat{\beta}-\beta)(\hat{\beta}-\beta)^T]=& E[\beta y^T][E[yy^T]]^{-1}\underbrace{E[yy^T][E[yy^T]]^{-1}}_IE[y \beta^T]\nonumber\\
	-& E[\beta y^T][E[yy^T]]^{-1}E[y\beta^T]-E[\beta y^T][E[yy^T]]^{-1}E[y \beta^T]+E[\beta\beta^T]\nonumber\\
	=& E[\beta\beta^T]-E[\beta y^T][E[yy^T]]^{-1}E[y \beta^T]\nonumber
	\end{align}
	\end{proof}


If $\beta$ and $y$ have zero mean then 
\[ E[y]=0=E[W\beta+\varepsilon]=WE[\beta]+E[\varepsilon]\stackrel{E[\beta]=0}{=\joinrel=}E[\varepsilon]=0 \]
and we can write
\[ E[\hat{\beta}]=E[Ky]=E[KW\beta+K\varepsilon]=KWE[\beta]+KE[\varepsilon]=0=E[\beta]. \]
Therefore, $\hat{\beta}$ is an unbiased estimate of $\beta$.

Also, note that $\|\hat{\beta}-\beta\|^2$ can be written as
\[ \|\hat{\beta}-\beta\|^2 = E[(\hat{\beta}-\beta)^T(\hat{\beta}-\beta)] = E[\|(\hat{\beta}-\beta)\|^2_2] \]
where $\|.\|_2$ is the standard two-norm and we denote $E[\|(\hat{\beta}-\beta)\|^2_2]$ as the error variance. 

\begin{corollary}\label{col:1}
	Suppose that $y=W\beta+\varepsilon$, where $y$ is a known $m$-dimensional vector, $\beta$ is an $n$-dimensional unknown random vector, $\varepsilon$ is an 
	unknown $m$-dimensional random vector and $W_{m\times n}$ is a known  constant matrix and
	\begin{align}
	& E[\varepsilon\varepsilon^T]=Q\geq0&\text{ (noise covariance)}\nonumber\\
	& E[\beta\beta^T]=R\geq0&\text{ (input covariance for $\beta$)}\nonumber\\
	& E[\varepsilon\beta^T]=0&\text{ (no correlation between input and noise)}\nonumber
	\end{align}
	we also assume that $WRW^T+Q$ is invertible.
	
	Then the linear estimate $\hat{\beta}$ of $\beta$ that minimizes the error variance \mbox{$E[\|\hat{\beta}-\beta\|^2_2]$} is 
	\begin{equation}
	\hat{\beta}=RW^T(WRW^T+Q)^{-1}y
	\end{equation}
	with error covariance
	\[ E[(\beta-\hat{\beta})(\beta-\hat{\beta})^T]=R-RW^T(WRW^T+Q)^{-1}WR \]
\end{corollary}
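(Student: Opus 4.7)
The plan is to apply Theorem~\ref{thm:Minimum Variance} directly, treating $y=W\beta+\varepsilon$ as a particular instance of the generic setup there. Theorem~\ref{thm:Minimum Variance} already tells us that the optimal linear estimator is $\hat{\beta}=E[\beta y^T]\bigl(E[yy^T]\bigr)^{-1}y$, with error covariance $E[\beta\beta^T]-E[\beta y^T]\bigl(E[yy^T]\bigr)^{-1}E[y\beta^T]$. So the entire proof reduces to evaluating the two cross-moments $E[\beta y^T]$ and $E[yy^T]$ in closed form using the structural assumptions of the corollary.

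First I would compute $E[\beta y^T]$ by expanding $y^T=\beta^T W^T+\varepsilon^T$ and using linearity of expectation: the first piece gives $E[\beta\beta^T]W^T=RW^T$, and the second piece gives $E[\beta\varepsilon^T]=0$ by the no-correlation hypothesis. Hence $E[\beta y^T]=RW^T$. Next I would expand $E[yy^T]=E[(W\beta+\varepsilon)(W\beta+\varepsilon)^T]$, distribute, and kill the two cross terms $WE[\beta\varepsilon^T]$ and $E[\varepsilon\beta^T]W^T$ using the same hypothesis; this leaves $E[yy^T]=WRW^T+Q$, which is invertible by the standing assumption of the corollary. Substituting these two quantities into the formula from Theorem~\ref{thm:Minimum Variance} immediately yields $\hat{\beta}=RW^T(WRW^T+Q)^{-1}y$.

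For the covariance formula, I would plug the same moments into the error-covariance expression from Theorem~\ref{thm:Minimum Variance}, noting that $E[y\beta^T]=\bigl(E[\beta y^T]\bigr)^T=WR$ since $R=R^T$. This gives $E[(\beta-\hat{\beta})(\beta-\hat{\beta})^T]=R-RW^T(WRW^T+Q)^{-1}WR$, which is exactly the stated result. The only mild point of care is to remark that the error variance $E[\|\hat{\beta}-\beta\|_2^2]$ being minimized is the trace of the componentwise error covariance (as noted just before the corollary), so that the objective here coincides with $\|\hat{\beta}-\beta\|^2$ as defined in Theorem~\ref{thm:Minimum Variance}; once that identification is made, Theorem~\ref{thm:Minimum Variance} applies verbatim.

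There is no real obstacle in this proof: the argument is essentially a specialization, and the work is purely algebraic. The only thing worth double-checking is the symmetry $R^T=R$ and $Q^T=Q$ used implicitly when rewriting $E[y\beta^T]=WR$, and the invertibility hypothesis $WRW^T+Q$ which is explicitly assumed and guarantees the expression is well defined.
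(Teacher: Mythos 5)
Your proposal is correct and follows essentially the same route as the paper: compute $E[\beta y^T]=RW^T$ and $E[yy^T]=WRW^T+Q$ from the structural assumptions, then substitute into the estimator and error-covariance formulas of Theorem~\ref{thm:Minimum Variance}, using $E[y\beta^T]=WR$ by symmetry of $R$. No substantive differences from the paper's proof.
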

\begin{proof}
	\begin{align}
	E[yy^T]&=E[(W\beta+\varepsilon)(W\beta+\varepsilon)^T]\nonumber\\
	&=E[W\beta\beta^TW^T+W\beta\varepsilon^T+\varepsilon\beta^TW^T+\varepsilon\varepsilon^T]\nonumber\\
	&=W\underbrace{E[\beta\beta^T]}_{R}W^T+W\underbrace{E[\beta\varepsilon^T]}_0+\underbrace{E[\varepsilon\beta^T]}_0W^T+\underbrace{E[\varepsilon\varepsilon^T]}_Q\nonumber\\
	&=WRW^T+Q\nonumber
	\end{align}
	and
	\begin{align}
	E[\beta y^T]&=E[\beta(W\beta+\varepsilon)^T]=E[\beta(\beta^TW^T+\varepsilon^T)]\nonumber\\
	&=E[\beta\beta^TW^T+\beta\varepsilon^T]=\underbrace{E[\beta\beta^T]}_{R}W^T+E[\beta\varepsilon^T]=RW^T,\nonumber
	\end{align}therefore
\[ \hat{\beta}=E[\beta y^T][E[yy^T]]^{-1} = RW^T(WRW^T+Q)^{-1}y\]
 and
\begin{align}
E[(\beta-\hat{\beta})(\beta-\hat{\beta})]&=
E[\beta\beta^T]-E[\beta y^T][E[yy^T]]^{-1}E[y\beta^T]\nonumber\\
&=R-RW^T(WRW^T+Q)^{-1}WR^T\nonumber\\
&=R-RW^T(WRW^T+Q)^{-1}WR,\nonumber
\end{align}  
since $E[y\beta^T]=WR^T=WR$.
\end{proof}
\begin{corollary}
	The estimate given by corollary \ref{col:1} can be expressed in the alternative form 
	\begin{equation}
	\label{eq:BetaHatAlt}
	\hat{\beta}=(W^TQ^{-1}W+R^{-1})^{-1}W^TQ^{-1}y 
	\end{equation}
	with corresponding error covariance 
	\[ E[(\hat{\beta}-\beta)(\hat{\beta}-\beta)^T]=(W^TQ^{-1}W+R^{-1})^{-1}.\]
\end{corollary}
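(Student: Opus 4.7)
The plan is to show that the two stated formulas are algebraic rearrangements of the expressions already obtained in Corollary \ref{col:1}, so that the minimum variance optimality transfers automatically. The main tool is the matrix inversion lemma. The hypotheses of Corollary \ref{col:1} give $R, Q \geq 0$, so to write $R^{-1}$ and $Q^{-1}$ I would begin by implicitly assuming $R > 0$ and $Q > 0$ (the usual Kalman setting).

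The first step is to verify the identity
\[ (W^TQ^{-1}W + R^{-1})^{-1}W^TQ^{-1} = RW^T(WRW^T + Q)^{-1}. \]
This follows from the short chain
\[ (W^TQ^{-1}W + R^{-1})\, RW^T = W^TQ^{-1}WRW^T + W^T = W^TQ^{-1}(WRW^T + Q), \]
after which I left-multiply by $(W^TQ^{-1}W + R^{-1})^{-1}$ and right-multiply by $(WRW^T + Q)^{-1}$. Applying this to $\hat\beta = RW^T(WRW^T+Q)^{-1}y$ from Corollary \ref{col:1} yields \eqref{eq:BetaHatAlt}.

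The second step is the Woodbury identity
\[ (W^TQ^{-1}W + R^{-1})^{-1} = R - RW^T(WRW^T + Q)^{-1}WR, \]
which is the standard matrix inversion lemma applied with $A = R^{-1}$, $U = W^T$, $C = Q^{-1}$, $V = W$. I would verify it directly by expanding the product $(W^TQ^{-1}W + R^{-1})\bigl[R - RW^T(WRW^T+Q)^{-1}WR\bigr]$ and checking that it simplifies to the identity matrix; the cross terms collapse by grouping as $W^TQ^{-1}(Q + WRW^T)(WRW^T+Q)^{-1}WR = W^TQ^{-1}WR$. Combining this identity with the error covariance formula of Corollary \ref{col:1} yields the second claim.

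The main obstacle is simply the bookkeeping in the Woodbury verification: there are four terms from the expansion, and the cancellation only becomes obvious once one recognizes the factor $WRW^T + Q$ appearing in the middle. No new estimation-theoretic content is needed, so once the two algebraic identities above are in hand the proof is mechanical.
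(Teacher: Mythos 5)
Your proposal is correct and takes essentially the same route as the paper: both establish the identity $RW^T(WRW^T+Q)^{-1}=(W^TQ^{-1}W+R^{-1})^{-1}W^TQ^{-1}$ by the same multiply-through-by-the-two-factors computation, and both obtain the covariance formula by the same algebra (the paper substitutes the first identity into $R-RW^T(WRW^T+Q)^{-1}WR$ and factors out $(W^TQ^{-1}W+R^{-1})^{-1}$, which is exactly the cancellation your direct Woodbury expansion performs). Your explicit observation that $R>0$ and $Q>0$ must be assumed for $R^{-1}$ and $Q^{-1}$ to exist is a point the paper leaves implicit, since Corollary \ref{col:1} only requires $R,Q\geq 0$.
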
\label{col:2}
\begin{proof}
	We need to show that $RW^T(WRW^T+Q)^{-1}=(W^TQ^{-1}W+R^{-1})^{-1}W^TQ^{-1}$. We  prove this by pre-multiplying both sides by $(WRW^T+Q)$ and post-multiplying  both sides by $(W^TQ^{-1}W+R^{-1})$.
	\begin{align}
	&(W^TQ^{-1}W+R^{-1})\big[RW^T(WRW^T+Q)^{-1}\big](WRW^T+Q)=\nonumber\\
	&(W^TQ^{-1}W+R^{-1})\big[(W^TQ^{-1}W+R^{-1})^{-1}W^TQ^{-1}\big](WRW^T+Q)\nonumber\\
	\Leftrightarrow & (W^TQ^{-1}W+R^{-1})RW^T=W^TQ^{-1}(WRW^T+Q)\nonumber\\
	\Leftrightarrow & W^TQ^{-1}WRW^T+\underbrace{R^{-1}R}W^T_I=W^TQ^{-1}WRW^T+W^T\underbrace{Q^{-1}Q}_I \nonumber\\
	\Leftrightarrow & W^TQ^{-1}WRW +W^T=W^TQ^{-1}WRW^T+W^T\,\checkmark\nonumber
	\end{align}
	
	Substituting in corollary \ref*{col:1} we have
	\begin{align}
	E[(\hat{\beta}-\beta)(\hat{\beta}-\beta)^T]&=R-RW^T(WRW^T+Q)^{-1}WR\nonumber\\
	&=R-(W^TQ^{-1}W+R^{-1})^{-1}W^TQ^{-1}WR\nonumber\\
	&=(W^TQ^{-1}W+R^{-1})^{-1}(W^TQ^{-1}W+R^{-1})R\nonumber\\
	&-(W^TQ^{-1}W+R^{-1})^{-1}W^TQ^{-1}WR\nonumber\\
	&=(W^TQ^{-1}W+R^{-1})^{-1}\big[(W^TQ^{-1}W+R^{-1})R-W^TQ^{-1}WR\big]\nonumber\\
	&=(W^TQ^{-1}W+R^{-1})^{-1}\big[W^TQ^{-1}WR+R^{-1}R-W^TQ^{-1}WR\big]\nonumber\\
	&=(W^TQ^{-1}W+R^{-1})^{-1}\,\checkmark\nonumber
	\end{align}
\end{proof}
If we compare equation \ref{eq:BetaHatAlt} of corollary \ref*{col:2} to the Gauss-Markov estimate we see that if $R^{-1}=0$, corresponding to infinite variance of prior  on $\beta$, then the minimum-variance estimate is equal to the Gauss-Markov estimate. In other words, the Gauss-Markov estimate is a special case of the minimum-variance estimate, when we have no prior information on $\beta$.

\subsection{Preliminary Theorems}
\begin{theorem}\label{th:Tbeta}
	The minimum variance linear estimate of a linear function  of $\beta$ is equal to the linear function of the minimum variance estimate of $\beta$.\\
	 In other words given a matrix $T$ (the linear function), the minimum variance estimate of $T\beta$  is $T\hat{\beta}=TE[\beta y^T]\big[E[yy^T]\big]^{-1}y$.
\end{theorem}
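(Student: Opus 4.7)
The plan is to reduce this statement directly to Theorem \ref{thm:Minimum Variance} by treating $T\beta$ as a new random vector and applying the previous result verbatim. Since $T$ is a deterministic (constant) matrix and $\beta$ is a random vector, $T\beta$ is itself a random vector in the same Hilbert space, and the observation vector $y$ is unchanged. The hypothesis of Theorem \ref{thm:Minimum Variance} (invertibility of $E[yy^T]$) involves only $y$, so it carries over automatically, and we are free to apply the theorem with $T\beta$ playing the role formerly played by $\beta$.

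First I would restate the target: we seek the linear estimate $\widehat{T\beta}$ of $T\beta$ based on $y$, i.e.\ some matrix $K'$ such that $K'y$ minimizes $\|K'y - T\beta\|^2$. By Theorem \ref{thm:Minimum Variance} applied to the pair $(T\beta, y)$, the optimum is
\[
\widehat{T\beta}= E[(T\beta)\,y^T]\bigl[E[yy^T]\bigr]^{-1}y.
\]
Next I would use linearity of the expectation operator, together with the fact that $T$ is a constant, to pull $T$ outside:
\[
E[(T\beta)\,y^T] = E[T\beta y^T] = T\,E[\beta y^T].
\]
Substituting this back yields
\[
\widehat{T\beta}= T\,E[\beta y^T]\bigl[E[yy^T]\bigr]^{-1}y = T\hat{\beta},
\]
where the final equality uses the formula for $\hat{\beta}$ from Theorem \ref{thm:Minimum Variance}.

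There is no real obstacle here; the only thing to be slightly careful about is the logical direction. Observe that we are not merely computing $T$ times the already-chosen estimate $\hat{\beta}$; rather, we are solving an a priori different optimization problem (minimizing $\|K'y - T\beta\|^2$ over all $K'$) and then showing its optimum coincides with $T\hat{\beta}$. The identification works because the orthogonality (normal) equations for the new problem, $E[(T\beta - K'y)y^T]=0$, become $T\,E[\beta y^T] = K' E[yy^T]$, whose solution $K' = TE[\beta y^T][E[yy^T]]^{-1} = TK$ is exactly $T$ times the optimal $K$ for the original problem. Thus the minimum-variance linear estimator commutes with linear transformations, as claimed.
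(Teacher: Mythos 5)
Your proposal is correct and follows essentially the same route as the paper: both substitute $T\beta$ for $\beta$ in the normal equations of Theorem \ref{thm:Minimum Variance} and use linearity of expectation to pull the constant matrix $T$ out of $E[(T\beta)y^T]$, yielding $K'=TK$. Your added remark about the logical direction (solving the new optimization problem rather than merely transforming the old estimate) is a point the paper leaves implicit, but it does not change the argument.
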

\begin{proof}
	Using the proof of the minimum variance theorem and by replacing $\beta$ with $T\beta$ we  write
\begin{align}
	E[yy^T]K^T&=E[y(T\beta)^T]=E[y\beta^TT^T]=E[y\beta^T]T^T\nonumber\\
	\Rightarrow K^T&= \left( E[yy^T]\right) ^{-1}E[y\beta^T]T^T\nonumber\\
	\Rightarrow K &= T\underbrace{E[\beta y^T](E[yy^T])^{-1}}_{\hat{\beta}}=T\hat{\beta}\quad\nonumber
\end{align}
\end{proof}

\begin{theorem}
	If $\hat{\beta}=Ky$ is the minimum variance estimate of $\beta$, then $\hat{\beta}$ is also the linear estimate that minimizes $E[(\beta-\hat{\beta})^TP(\beta-\hat{\beta})]$ for any positive semi-definite $P_{n\times n}$ . 
\end{theorem}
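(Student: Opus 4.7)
The plan is to reduce this statement to the previous theorem (Theorem \ref{th:Tbeta}) by an appropriate change of variable. Since $P$ is symmetric positive semi-definite, we can factor it as $P = L^T L$ for some matrix $L$ (for example, taking $L = P^{1/2}$ via the spectral decomposition, or $L$ the Cholesky factor when $P$ is positive definite). The key observation is then that, for any linear estimate $\tilde{\beta} = K'y$ of $\beta$,
\[
E[(\beta-\tilde{\beta})^T P(\beta-\tilde{\beta})] = E[(L\beta - L\tilde{\beta})^T(L\beta - L\tilde{\beta})] = \|L\tilde{\beta} - L\beta\|^2,
\]
where the last norm is the Hilbert-space norm introduced in Section~3.2. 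So minimizing the $P$-weighted quadratic form over linear estimates of $\beta$ is the same as minimizing $\|L\tilde{\beta} - L\beta\|^2$ over linear estimates of $L\beta$ of the form $(LK')y$.

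First I would record this factorization and the resulting identity. Next I would invoke Theorem \ref{th:Tbeta} with the linear map $T = L$: it states that the minimum variance linear estimate of $L\beta$ is $L\hat{\beta}$, where $\hat{\beta}$ is the minimum variance estimate of $\beta$. Concretely, this means that for every matrix $M$ (of the appropriate size) the inequality
\[
\|L\hat{\beta} - L\beta\|^2 \leq \|My - L\beta\|^2
\]
holds. Specializing to $M = LK'$ for an arbitrary linear estimator $\tilde{\beta}=K'y$, we obtain
\[
\|L\hat{\beta} - L\beta\|^2 \leq \|L\tilde{\beta} - L\beta\|^2,
\]
which by the identity above is exactly
\[
E[(\hat{\beta}-\beta)^T P(\hat{\beta}-\beta)] \leq E[(\tilde{\beta}-\beta)^T P(\tilde{\beta}-\beta)].
\]
Since $\tilde{\beta}$ was arbitrary, $\hat{\beta}$ minimizes the $P$-weighted criterion.

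The only mildly delicate point is justifying the factorization $P = L^T L$ and ensuring that the argument still goes through when $P$ is merely positive semi-definite rather than positive definite (in that case $L$ is rectangular, and the map $\beta \mapsto L\beta$ may have a nontrivial kernel, but Theorem \ref{th:Tbeta} still applies to $L\beta$ directly, so no extra work is needed). No new computation is required; the whole proof is a one-line reduction to Theorem \ref{th:Tbeta} once the factorization is in place.
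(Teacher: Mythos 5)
Your proposal is correct and follows essentially the same route as the paper: factor $P$ (the paper uses the symmetric square root $P^{1/2}$, you allow a general $P=L^TL$), identify the $P$-weighted quadratic form with the Hilbert-space norm of the transformed error, and invoke Theorem \ref{th:Tbeta} to conclude that $L\hat{\beta}$ is optimal for $L\beta$. Your extra remark that estimates of the form $LK'y$ are a subclass of all linear estimates $My$ (so optimality over the larger class implies optimality over the subclass) makes explicit a step the paper leaves implicit, but it is the same argument.
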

\begin{proof}
	Let $P^{\frac{1}{2}}$ be positive square root of $P$. According to theorem \ref{th:Tbeta}, $P^{\frac{1}{2}}\hat{\beta}$ is the minimum variance estimate of $P^{\frac{1}{2}}\beta$, which means that $P^{\frac{1}{2}}\hat{\beta}$ minimizes
	\begin{align}
	E[\|P^{\frac{1}{2}}\hat{\beta}-P^{\frac{1}{2}}\beta\|^2_2]&= E[(P^{\frac{1}{2}}\hat{\beta}-P^{\frac{1}{2}}\beta)^T(P^{\frac{1}{2}}\hat{\beta}-P^{\frac{1}{2}}\beta)]\nonumber\\
	&=E[(\hat{\beta}^T{P^{\frac{1}{2}}}^T-\beta^T {P^{\frac{1}{2}}}^T) (P^{\frac{1}{2}}\hat{\beta}-P^{\frac{1}{2}}\beta)].\nonumber
	\end{align} 
	Since $P^{\frac{1}{2}}\geq 0$ then ${{P^\frac{1}{2}}^T}=P^{\frac{1}{2}}$ and the proof follows by noting that
	\begin{align}
	E[\|P^{\frac{1}{2}}\hat{\beta}-P^{\frac{1}{2}}\beta\|^2_2]	&=E[(\hat{\beta}^T{P^{\frac{1}{2}}}-\beta^T {P^{\frac{1}{2}}}) (P^{\frac{1}{2}}\hat{\beta}-P^{\frac{1}{2}}\beta)]\nonumber\\
	&=E[\hat{\beta}^TP\hat{\beta}-\hat{\beta}^TP\beta-\beta^TP\hat{\beta}+\beta^TP\beta]=*\nonumber\\
	E[(\beta-\hat{\beta})^TP(\beta-\hat{\beta})]&=E[(\beta^T-\hat{\beta}^T)P(\beta-\hat{\beta})]
	=E[(\beta^TP-\hat{\beta}^TP)(\beta-\hat{\beta})]\nonumber\\
	&=E[\hat{\beta}^TP\hat{\beta}-\hat{\beta}^TP\beta-\beta^TP\hat{\beta}+\beta^TP\beta]=*\quad\nonumber
	\end{align}
\end{proof}

\subsection{Updating the Estimate}
We consider the problem of updating the estimate of $\beta$ if additional data becomes available.

First we define the sum of two vector subspaces  $\mathcal{Y}_1 + \mathcal{Y}_2$  of a Hilbert space $\mathcal{H}$ as consisting of all vectors in the form of $y_1+y_2$ where $y_1 \in \mathcal{Y}_1$ and $y_2 \in \mathcal{Y}_2$. We also define the vector space $\mathcal{Y}$ as the direct sum of two vector subspaces  $\mathcal{Y}=\mathcal{Y}_1 \oplus \mathcal{Y}_2$ if every vector $y \in \mathcal{Y}$ has a unique representation in the form of $y=y_1+y_2$ where $y_1 \in \mathcal{Y}_1$ and $y_2 \in \mathcal{Y}_2$. 

We know that if $\mathcal{Y}_1$ and $\mathcal{Y}_2$ are two subspaces of a Hilbert space $\mathcal{H}$, then $\mathcal{Y}_1+\mathcal{Y}_2$ is also a subspace of the space. We also know that if the subspace $\widetilde{\mathcal{Y}}_2$ is chosen such that $\widetilde{\mathcal{Y}}_2\perp \mathcal{Y}_1$ and  $\widetilde{\mathcal{Y}}_2\oplus\mathcal{Y}_1=\mathcal{Y}_1+\mathcal{Y}_2$\footnote{meaning that $\widetilde{\mathcal{Y}}_2\oplus\mathcal{Y}_1$ produces a subspace that is equal to the subspace of $\mathcal{Y}_1+\mathcal{Y}_2$}, then the projection of a vector $\beta\in\mathcal{H}$ onto $\mathcal{Y}_1+\mathcal{Y}_2$ is equal to the projection of $\beta$ onto $\mathcal{Y}_1$ plus the projection of  $\beta$ onto $\widetilde{\mathcal{Y}}_2$. This is visualized in Figure \ref{Fig:update}.



\begin{figure}[h]
	\centering
	\includegraphics[scale=0.4]{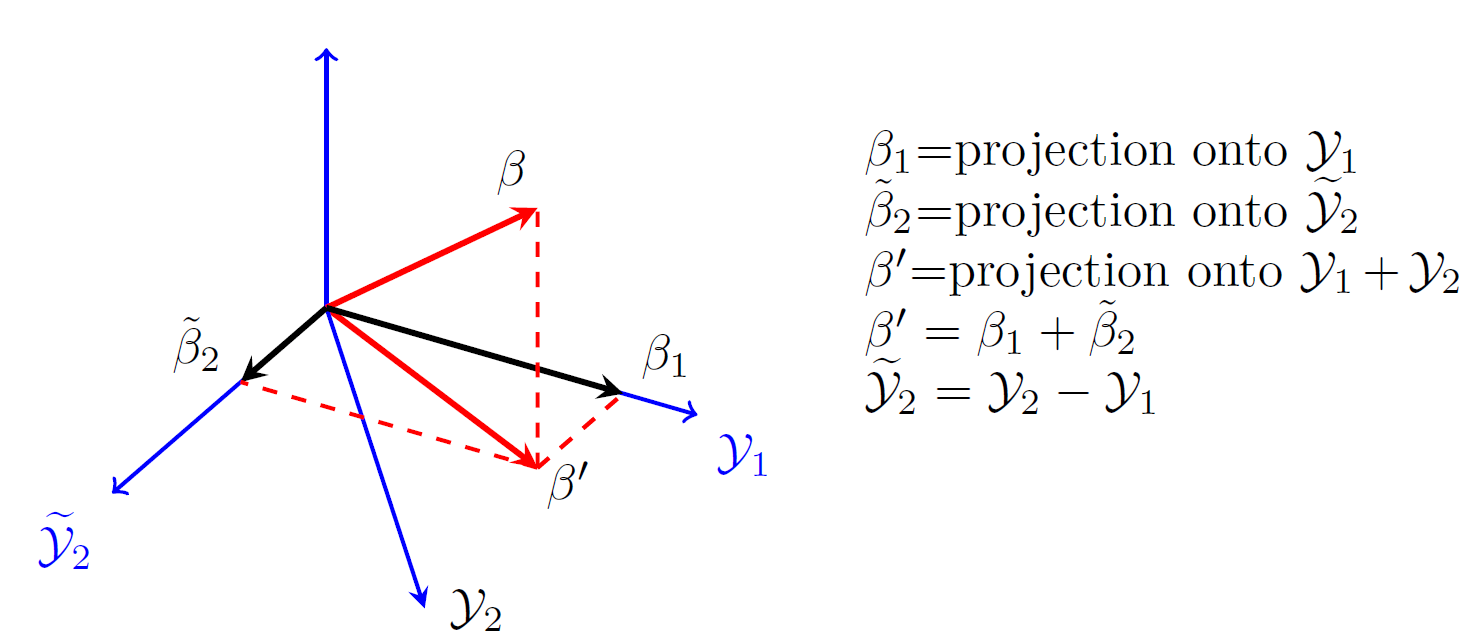} 
	\caption{A visualization of the vector $\beta$ being projected onto different subspaces.}
	\label{Fig:update}
\end{figure}

\begin{theorem}\label{thm:update}
	Let $\beta_i$ be a random variable and $\hat{\beta}_{i1}$ be the minimum variance estimate of $\beta_i$, given the random vector $y_1$. Just like the proof of the minimum variance estimator theorem, the elements of $y_1$ span a subspace $\mathcal{Y}_1\triangleq\mbox{all linear combinations of the elements of }y_1$.
	
	Let $y_2$ be a random vector and the elements of $y_2$ span a subspace $\mathcal{Y}_2$. Let $\hat{y}_2$ be the minimum variance estimate of $y_2$ in $\mathcal{Y}_1$. By the minimum variance estimate theorem, this is equivalent to saying that $\hat{y}_2$ is the orthogonal projection of the elements of ${y}_2$ onto $\mathcal{Y}_1$.
	
	Let $\tilde{y}_2=y_2-\hat{y}_2$, then the minimum variance estimate of $\beta_i$, given $y_1$ and $y_2$, is denoted by $\hat{\beta}_{i2}$ and can be found as
	\[ \hat{\beta}_{i2}=\hat{\beta}_{i1}+E[\beta_i\tilde{y}_2^T]\big[E[\tilde{y}_2\tilde{y}_2^T]\big]^{-1}\tilde{y}_2. \]
	 This is equal to saying that the orthogonal projection of $\beta_i$ onto $\mathcal{Y}_1+\mathcal{Y}_2$ is denoted by $\hat{\beta}_{i2}$.
	 In other words $\hat{\beta}_{i2}$ is $\hat{\beta}_{i1}$ plus the minimum variance estimate of $\beta_i$ given the random vector $\tilde{y}_2$. This is similar to finding the orthogonal projection of $\hat{\beta}_i$ onto $\widetilde{\mathcal{Y}}_2$ which is generated by $\tilde{y}_2$.
\end{theorem}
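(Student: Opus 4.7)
The plan is to treat the whole problem inside the Hilbert space $\mathcal{H}$ of random vectors developed in Section 3, where the minimum variance estimate has already been identified (via the normal equations in the proof of Theorem \ref{thm:Minimum Variance}) with the orthogonal projection of $\beta_i$ onto the subspace spanned by the data. Under this identification, $\hat\beta_{i2}$ is exactly the orthogonal projection of $\beta_i$ onto $\mathcal{Y}_1+\mathcal{Y}_2$, and the task reduces to decomposing that projection as a sum of two simpler projections.

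First I would construct the auxiliary subspace $\widetilde{\mathcal{Y}}_2$ spanned by the components of $\tilde y_2=y_2-\hat y_2$ and verify the two properties needed for the geometric fact recalled just before the theorem statement: (i) $\widetilde{\mathcal{Y}}_2\perp \mathcal{Y}_1$, and (ii) $\widetilde{\mathcal{Y}}_2\oplus \mathcal{Y}_1=\mathcal{Y}_1+\mathcal{Y}_2$. Property (i) follows directly from the orthogonality principle applied to the definition of $\hat y_2$: since $\hat y_2$ is the projection of each component of $y_2$ onto $\mathcal{Y}_1$, the residual $\tilde y_2$ is orthogonal to every element of $\mathcal{Y}_1$, hence so is every linear combination of its components. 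Property (ii) follows from the identity $y_2=\hat y_2+\tilde y_2$ with $\hat y_2\in\mathcal{Y}_1$: any $y\in\mathcal{Y}_1+\mathcal{Y}_2$ can be written as an element of $\mathcal{Y}_1$ plus an element of $\widetilde{\mathcal{Y}}_2$, and uniqueness of the decomposition is forced by the orthogonality in (i).

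Given (i) and (ii), the projection-of-the-sum fact recalled in the excerpt applies, so the projection of $\beta_i$ onto $\mathcal{Y}_1+\mathcal{Y}_2$ equals its projection onto $\mathcal{Y}_1$ plus its projection onto $\widetilde{\mathcal{Y}}_2$. The first of these projections is $\hat\beta_{i1}$ by definition. For the second, I would invoke Theorem \ref{thm:Minimum Variance} directly, applied to the random vector $\tilde y_2$ in place of $y$: since $\widetilde{\mathcal{Y}}_2$ is precisely the span of the components of $\tilde y_2$, the minimum variance estimate of $\beta_i$ based on $\tilde y_2$ is $E[\beta_i\tilde y_2^T]\bigl(E[\tilde y_2\tilde y_2^T]\bigr)^{-1}\tilde y_2$, which is the same as the orthogonal projection of $\beta_i$ onto $\widetilde{\mathcal{Y}}_2$. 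Adding the two projections yields the claimed formula.

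The main obstacle I expect is a bookkeeping one rather than a conceptual one: carefully justifying that the span of the components of $\tilde y_2$ is orthogonal to the entire subspace $\mathcal{Y}_1$ (not just to the components of $y_1$) and that $E[\tilde y_2\tilde y_2^T]$ is invertible so that Theorem \ref{thm:Minimum Variance} can be applied to $\tilde y_2$. The first point is a linearity argument using the inner product $(\cdot\,|\,\cdot)=\mathrm{Trace}(E[\,\cdot\,\cdot^T])$; the second would need either to be assumed implicitly (as the theorem statement already presumes existence of the inverse) or addressed by noting that any singular components of $\tilde y_2$ can be discarded without changing $\widetilde{\mathcal{Y}}_2$ or the projection. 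Once these two points are handled, the rest is a direct citation of the earlier results.
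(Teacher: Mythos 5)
Your proposal is correct and follows essentially the same route as the paper: identify $\hat{\beta}_{i2}$ with the projection onto $\mathcal{Y}_1+\mathcal{Y}_2=\widetilde{\mathcal{Y}}_2\oplus\mathcal{Y}_1$, use $\widetilde{\mathcal{Y}}_2\perp\mathcal{Y}_1$ to split it into the projection onto $\mathcal{Y}_1$ (which is $\hat{\beta}_{i1}$) plus the projection onto $\widetilde{\mathcal{Y}}_2$ given by Theorem \ref{thm:Minimum Variance}. If anything, you supply more detail than the paper, which simply asserts the orthogonality and direct-sum properties with reference to the figure rather than verifying them from the orthogonality principle as you do.
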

\begin{proof}
	The orthogonal projection of $\beta_i$ onto $\mathcal{Y}_1+\mathcal{Y}_2$ is the same as the orthogonal projection of $\beta_i$ onto $\widetilde{\mathcal{Y}}_2\oplus\mathcal{Y}_1$ since $\mathcal{Y}_1+\mathcal{Y}_2=\widetilde{\mathcal{Y}}_2\oplus\mathcal{Y}_1$, as visualized in Figure \ref{Fig:update}.
	
	Also, since $\widetilde{\mathcal{Y}}_2\perp \mathcal{Y}_1$, this orthogonal projection of $\beta_i$ onto $\widetilde{\mathcal{Y}}_2\oplus\mathcal{Y}_1$ (which we denote as $\hat{\beta}_{i2}$) is equal to the sum of individual projections onto $\mathcal{Y}_1$ (which is $\hat{\beta}_{i1}$) and onto $\widetilde{\mathcal{Y}}_2$ (which is $E[\beta_i\tilde{y}_2^T]\big[E[\tilde{y}_2\tilde{y}_2^T]\big]^{-1}\tilde{y}_2$). Therefore 
	\[ \hat{\beta}_{i2}=\hat{\beta}_{i1}+E[\beta_i\tilde{y}_2^T]\big[E[\tilde{y}_2\tilde{y}_2^T]\big]^{-1}\tilde{y}_2. \]
\end{proof}

\textbf{Intuition:} Given new data, the updating is based on the part of the new data that is orthogonal to the old data. This means that the updating is based 
on $\widetilde{\mathcal{Y}}_2$ which is orthogonal to the old data $\mathcal{Y}_1$. 

\subsubsection{Example on Updating the Estimate }
\label{sec:Example}
Suppose that an optimal estimate $\hat{\beta}$ of a random vector $\beta$ has been formed on the basis of past measurements and that
\[E[(\beta-\hat{\beta})(\beta-\hat{\beta})^T]=R. \]
 Given additional measurements of the form $y=W\beta+\varepsilon$, where $\varepsilon$ is a random vector of zero mean which is uncorrelated to both $\beta$ and the past measurements, we seek to find the updated optimal estimate $\hat{\hat{\beta}}$ and the error covariance $E[(\beta-\hat{\hat{\beta}})(\beta-\hat{\hat{\beta}})^T]$.

 Using the previous theorem we know that 
	\[ \hat{\hat{\beta}}=\hat{\beta}+E[\beta\tilde{y}^T]\big[E[\tilde{y}\tilde{y}^T]\big]^{-1}\tilde{y}, \]
	where $\tilde{y}=y-\hat{y}$ and $\hat{y}$ is the minimum variance estimate of $y$ given previous measurements.
But, the minimum variance estimate of $y$ is equal to the minimum variance estimate of $W\beta$ (since $y=W\beta$), which by Theorem \ref{th:Tbeta} is equal to $W\hat{\beta}$. Hence, we have $\tilde{y}=y-W\hat{\beta}$.

Note that $y=W\beta+\varepsilon$ and not $y=W\beta$ but since $\varepsilon$ is zero mean and uncorrelated to $\beta$ and the past measurements, the proofs of Theorems \ref{thm:Minimum Variance},\ref{th:Tbeta} and Corollary \ref{col:1} makes it clear that the minimum variance estimate of $y=W\beta+\varepsilon$ is $W\hat{\beta}$ since $E[\varepsilon\beta^T]=0$.

In order to compute $\hat{\hat{\beta}}$ we need to compute $E[\beta\tilde{y}^T]$ and $E[\tilde{y}\tilde{y}^T]$. To do this we must consider a few things 
regarding $\tilde{y}=y-W\hat{\beta}$.
\begin{itemize}
	\item $y$= the new measurement =$W\beta+\varepsilon$.
	\item $W\hat{\beta}$= the best estimate of the new measurement $y$ based on the past measurement $y_p$. 
	\item $y_p$= the past measurement from which the estimate $\hat{\beta}$ was made. ($y_p$ was used to find $\hat{\beta}$.)
	\item $y_p$ was also found from the $W\beta+\varepsilon$ process  so $y_p=W\beta+\varepsilon$.
\end{itemize}

We have previously proven that 
\[ E[(\beta-\hat{\beta})(\beta-\hat{\beta})^T]=R-RW^T(WRW^T+Q)^{-1}WR\triangleq\mathfrak{R} \]
where 
\begin{align}
Q &= E[\varepsilon\varepsilon^T]\nonumber\\
R &= E[\beta\beta^T]\nonumber\\
E[y_p\beta^T] &= WR\nonumber\\
E[\beta y_p^T] &= RW^T\nonumber\\
\hat{\beta}&=RW^T(WRW^T+Q)^{-1}y_p\nonumber\\
\hat{y}=W\hat{\beta}&=WRW^T(WRW^T+Q)^{-1}y_p.\nonumber
\end{align}

Using these previous results on the past measurement $y_p$ and by multiplying both sides of the above by $W^T$ from the right and $W$ from the left we have
\[ W\mathfrak{R}W^T=WRW^T-WRW^T(WRW^T+Q)^{-1}WRW^T. \]

We can now use the previous formulas to find $E[\beta\tilde{y}^T]$ and $E[\tilde{y}\tilde{y}^T]$ as
\begin{alignat}{2}
E[\beta\tilde{y}^T] &= E[\beta(y-W\hat{\beta})^T]&&=E[\beta(y^T-\hat{\beta}^TW^T)]\nonumber\\
&= E[\beta y^T]-E[\beta\hat{\beta}^TW^T] &&= E[\beta y^T]-E[\beta\hat{\beta}^T]W^T\nonumber\\
&=E[\beta (W\beta+\varepsilon)^T]&&-E[\beta y_p^T(WRW^T+Q)^{-1}WR]W^T\nonumber\\
&=E[\beta (\beta^T W^T+\varepsilon^T)]&&-E[\beta y_p^T](WRW^T+Q)^{-1}WRW^T\nonumber\\
&=E[\beta\beta^T] W^T+E[\beta\varepsilon^T]&&-E[\beta y_p^T](WRW^T+Q)^{-1}WRW^T\nonumber\\
&=R W^T\qquad+0&&-RW^T(WRW^T+Q)^{-1}WRW^T\nonumber\\
&=\mathfrak{R}W^T\nonumber
\end{alignat}
 and
\begin{align}
E[\tilde{y}\tilde{y}^T]&=E[(y-W\hat{\beta})(y-W\hat{\beta})^T]=E[(y-W\hat{\beta})(y^T-\hat{\beta}^TW^T)]\nonumber\\
&=E[yy^T-y\hat{\beta}^TW^T-W\hat{\beta}y+W\hat{\beta}\hat{\beta}^TW^T]\nonumber\\
&=E[yy^T]-E[y\hat{\beta}^T]W^T-WE[\hat{\beta}y^T]+WE[\hat{\beta}\hat{\beta}^T]W^T\nonumber\\
&=E[(W\beta+\varepsilon)(W\beta+\varepsilon)^T]-E[(W\beta+\varepsilon)(y_p^T(WRW^T+Q)^{-1}WR)]W^T\nonumber\\
&\quad-WE[(RW^T(WRW^T+Q)^{-1}y_p)(W\beta+\varepsilon)^T]\nonumber\\
&\quad+WE[(RW^T(WRW^T+Q)^{-1}y_p)(y_p^T(WRW^T+Q)^{-1}WR)]W^T\nonumber\\
&=WRW^T+Q-E[(W\beta+\varepsilon)y_p^T](WRW^T+Q)^{-1}WRW^T\nonumber\\
&\quad-WRW^T(WRW^T+Q)^{-1}E[y_p(W\beta+\varepsilon)^T]\nonumber\\
&\quad+WRW^T(WRW^T+Q)^{-1}E[y_py_p^T](WRW^T+Q)^{-1}WRW^T\nonumber
\end{align}

\begin{align}
&=WRW^T+Q\nonumber\\
&\quad-E[W\beta y_p^T+\varepsilon y_p^T](WRW^T+Q)^{-1}WRW^T\nonumber\\
&\quad-WRW^T(WRW^T+Q)^{-1}E[y_p\beta^TW^T+y_p\varepsilon^T]\nonumber\\
&\quad+WRW^T(WRW^T+Q)^{-1}(WRW^T+Q)(WRW^T+Q)^{-1}WRW^T\nonumber\\
&=WRW^T+Q\nonumber\\
&\quad-(E[W\beta y_p^T]+E[\varepsilon y_p^T])(WRW^T+Q)^{-1}WRW^T\nonumber\\
&\quad-WRW^T(WRW^T+Q)^{-1}(E[y_p\beta^TW^T]+E[y_p\varepsilon^T])\nonumber\\
&\quad+WRW^T(WRW^T+Q)^{-1}WRW^T\nonumber\\
&=WRW^T+Q\nonumber\\
&\quad-WE[\beta y_p^T](WRW^T+Q)^{-1}WRW^T\nonumber\\
&\quad-WRW^T(WRW^T+Q)^{-1}E[y_p\beta^T]W^T\nonumber\\
&\quad+WRW^T(WRW^T+Q)^{-1}WRW^T\nonumber\\
&\quad+WRW^T(WRW^T+Q)^{-1}WRW^T\nonumber\\
&=WRW^T+Q\nonumber\\
&\quad-WRW^T(WRW^T+Q)^{-1}WRW^T\nonumber\\
&\quad-WRW^T(WRW^T+Q)^{-1}WRW^T\nonumber\\
&\quad+WRW^T(WRW^T+Q)^{-1}WRW^T\nonumber\\
&=WRW^T-WRW^T(WRW^T+Q)^{-1}WRW^T+Q\nonumber\\
&=W\mathfrak{R}W^T+Q.\nonumber
\end{align}

Finally, noting that $\mathfrak{R}^T=\mathfrak{R}$, the   error covariance is found as
\begin{align}
&E[(\beta-\hat{\hat{\beta}})(\beta-\hat{\hat{\beta}})^T]\nonumber\\
&=E[(\beta-\hat{\beta}-\mathfrak{R}W^T(W\mathfrak{R}W^T+Q)^{-1}(y-W\hat{\beta}))(\beta-\hat{\beta}-\mathfrak{R}W^T(W\mathfrak{R}W^T+Q)^{-1}(y-W\hat{\beta}))^T]\nonumber\\
&=E[(\beta-\hat{\beta}-\mathfrak{R}W^T(W\mathfrak{R}W^T+Q)^{-1}(y-W\hat{\beta}))((\beta-\hat{\beta})^T-(y-W\hat{\beta})^T(W\mathfrak{R}W^T+Q)^{-1})W\mathfrak{R}]\nonumber\\
&=E[(\beta-\hat{\beta})(\beta-\hat{\beta})^T-(\beta-\hat{\beta})(y-W\hat{\beta})^T(W\mathfrak{R}W^T+Q)^{-1}W\mathfrak{R}\nonumber\\
&\quad-\mathfrak{R}W^T(W\mathfrak{R}W^T+Q)^{-1}(y-W\hat{\beta})(\beta-\hat{\beta})^T\nonumber\\
&\quad+\mathfrak{R}W^T(W\mathfrak{R}W^T+Q)^{-1}(y-W\hat{\beta})(y-W\hat{\beta})^T(W\mathfrak{R}W^T+Q)^{-1}W\mathfrak{R}].\nonumber
\end{align}
Noting that $E[(\beta-\hat{\beta})(\beta-\hat{\beta})^T]=\mathfrak{R}$, the error covariance is
\begin{align}
&=\mathfrak{R}-E[(\beta-\hat{\beta})(y-W\hat{\beta})^T](W\mathfrak{R}W^T+Q)^{-1}W\mathfrak{R}\nonumber\\
&\quad-\mathfrak{R}W^T(W\mathfrak{R}W^T+Q)^{-1}E[(y-W\hat{\beta})(\beta-\hat{\beta})^T]\nonumber\\
&\quad+\mathfrak{R}W^T(W\mathfrak{R}W^T+Q)^{-1}E[(y-W\hat{\beta})(y-W\hat{\beta})^T](W\mathfrak{R}W^T+Q)^{-1}W\mathfrak{R}.\nonumber
\end{align}
Since $E[(y-W\hat{\beta})(y-W\hat{\beta})^T]=W\mathfrak{R}W^T+Q$, the error covariance is
\begin{align}
&=\mathfrak{R}-E[\beta(y-W\hat{\beta})^T-\hat{\beta}(y-W\hat{\beta})^T](W\mathfrak{R}W^T+Q)^{-1}W\mathfrak{R}\nonumber\\
&\quad-\mathfrak{R}W^T(W\mathfrak{R}W^T+Q)^{-1}E[(y-W\hat{\beta})\beta^T-(y-W\hat{\beta})\hat{\beta}^T]\nonumber\\
&\quad+\mathfrak{R}W^T(W\mathfrak{R}W^T+Q)^{-1}(W\mathfrak{R}W^T+Q)(W\mathfrak{R}W^T+Q)^{-1}W\mathfrak{R}.\nonumber
\end{align}
Setting $E[\beta(y-W\hat{\beta})^T]=\mathfrak{R}W^T$ and $E[(y-W\hat{\beta}){\beta}^T]=W\mathfrak{R}$, the error covariance is
\begin{align}
&=\mathfrak{R}-\mathfrak{R}W^T(W\mathfrak{R}W^T+Q)^{-1}W\mathfrak{R}+E[\hat{\beta}(y-W\hat{\beta})^T](W\mathfrak{R}W^T+Q)^{-1}W\mathfrak{R}\nonumber\\
&\quad-\mathfrak{R}W^T(W\mathfrak{R}W^T+Q)^{-1}W\mathfrak{R}+\mathfrak{R}W^T(W\mathfrak{R}W^T+Q)^{-1}E[(y-W\hat{\beta})\hat{\beta}^T]\nonumber\\
&\quad+\mathfrak{R}W^T(W\mathfrak{R}W^T+Q)^{-1}W\mathfrak{R}.\nonumber
\end{align}
Finally, if $E[\hat{\beta}(y-W\hat{\beta})^T]=0$ then the error covariance is 
\begin{align}
E[(\beta-\hat{\hat{\beta}})(\beta-\hat{\hat{\beta}})^T]=\mathfrak{R}-\mathfrak{R}W^T(W\mathfrak{R}W^T+Q)^{-1}W\mathfrak{R}.\nonumber
\end{align}
To show that $E[\hat{\beta}(y-W\hat{\beta})^T]=0$ we write
\begin{align}
E[\hat{\beta}(y-W\hat{\beta})^T]&=E[\hat{\beta}y^T-\hat{\beta}\hat{\beta}^TW^T]=E[\hat{\beta}y^T]-E[\hat{\beta}\hat{\beta}^TW^T]\nonumber\\
&=E[\hat{\beta}(W\beta+\varepsilon)^T]-E[\hat{\beta}\hat{\beta}^T]W^T\nonumber\\
&=E[\hat{\beta}\beta^T]W^T+E[\hat{\beta}\varepsilon^T]-E[\hat{\beta}\hat{\beta}^T]W^T\nonumber\\
&=E[Ky_p\beta^T]W^T+E[Ky_p\varepsilon^T]-E[Ky_py_p^TK^T]W^T\nonumber\\
&=KE[y_p\beta^T]W^T+KE[y_p\varepsilon^T]-KE[y_py_p^T]K^TW^T.\nonumber
\end{align}
Setting $E[y_p\varepsilon^T]=0$ and $K=E[\beta y_p^T][E[y_py_p^T]]^{-1}$, the above equation is
\begin{align}
&=E[\beta y_p^T][E[y_py_p^T]]^{-1}E[y_p\beta^T]W^T-E[\beta y_p^T][E[y_py_p^T]]^{-1}E[y_py_p^T][E[y_py_p^T]]^{-1}E[y_p\beta^T ]W^T\nonumber\\
&=E[\beta y_p^T][E[y_py_p^T]]^{-1}E[y_p\beta^T]W^T-E[\beta y_p^T][E[y_py_p^T]]^{-1}E[y_p\beta^T ]W^T=0\nonumber
\end{align}

\section{Kalman Filtering}
\subsection{Dynamic Model of a Random Process}
An $n$-dimensional dynamic  random process can be modeled as follows. 
\begin{enumerate}
	\item A vector  difference equation \[x_{k+1}=\Phi_k x_k+u_k,\qquad k=0,1,2,\ldots\] which defines how the random vector $x_k$ changes. 
	
	\begin{itemize}
		\item Here $x_k$ is an $n$-dimensional state vector where each component is a random variable.
		\item $\Phi_k$ is a known $n\times n$  matrix.
		\item $u_k$ is an $n$-dimensional input random vector   with zero mean such that there is zero 
		correlation between present input at   $k$ and past input at  $l$, i.e:
		\[ E[u_k u_l^T]=Q_k\delta_{kl}=\left\lbrace\begin{array}{ll}
		Q_k&\qquad k=l\\
		0&\qquad k\neq l
		\end{array}\right. \]
		where $Q_k>0$ is a positive definite matrix.
	\end{itemize}
	\item An initial random vector $x_0$ and initial random estimate $\hat{x}_0$ with initial error covariance $ E[(x_0-\hat{x}_0)(x_0-\hat{x}_0)^T]=P_0$. 
	\item Measurements of the process in the form of 
	\[ z_k=H_k x_k+w_k,\qquad k=0,1,2,\ldots \]
	which defines how the measurements $z_k$ of the process $x_k$ are recorded.
	\begin{itemize}
		\item Here $H_k$ is a known $m\times n$ matrix.
		\item $w_k$ is an $n$-dimensional random measurement error with zero mean and 
		\[ E[w_k w_l^T]=R_k\delta_{kl}=\left\lbrace\begin{array}{ll}
		R_k&\qquad k=l\\
		0&\qquad k\neq l
		\end{array}\right. \] 
		where $R_k>0$ is a positive definite matrix.
	\end{itemize}
It is assumed that $x_0$,$u_j$,$w_k$ are all uncorrelated for $j\geq0,k\geq0$.
\end{enumerate}

\subsection{The Estimation Problem}
The estimation problem is defined as finding the  minimum-variance estimate of $x$ from measurements $z$.
We say that $\hat{x}_{k|j}$ is the optimal estimate of $x_{k}$ given $j$ measurements (or observations) of $z$. In other words,  $\hat{x}_{k|j}$ is the 
projection of $x_{k}$ onto the space $\mathcal{Z}_{j}$ generated by the random vectors $z_{0},z_{1},z_{2},\ldots,z_{j}$.

We consider the case of $k\geq j$ of either predicting future or present values given past measurements. Estimating  past values  is called the smoothing problem, which is substantially the same but with messier equations.

\subsection{Kalman Filter Theorem} \label{sec:vectorkalman}
\begin{theorem}
	The optimal estimate  $\hat{x}_{k+1|k}$ of a random state vector can be generated recursively as
	\begin{align} 
	\label{eq:1Kamlan}
	\hat{x}_{k+1|k}&=\Phi_{k}\hat{x}_{k|k-1}+\Phi_{k}P_{k}H_{k}^T[H_{k}P_{k}H^T_{k}+R_{k}]^{-1}(z_{k}-H_{k}\hat{x}_{k|k-1})
	\end{align}
	where $P_{k}$ is the $n\times n$ error covariance of $\hat{x}_{k|k-1}$ which is itself generated recursively as:
	\begin{align}
	\label{eq:2Kalman}
	P_{k+1}&=\Phi_{k}P_{k}\lbrace I-H^T_{k}[H_{k}P_{k}H^T_{k}+R_{k}]^{-1} H_{k}P_{k}\rbrace\Phi^T_{k}+Q_{k}
	\end{align}
	
	The required initial conditions are the initial estimates of $\hat{x}_{0}=\hat{x}_{0|-1}$ and its error covariance $P_{0}$.
\end{theorem}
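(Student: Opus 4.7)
The plan is induction on $k$, splitting each recursive step into a measurement update (fold in $z_k$) and a time update (propagate through the dynamics). Inductively assume that $\hat{x}_{k|k-1}$ is the minimum variance linear estimate of $x_k$ given $z_0,\ldots,z_{k-1}$, with error covariance $P_k$. Write $\mathcal{Z}_{k-1}$ for the Hilbert subspace spanned by the components of $z_0,\ldots,z_{k-1}$ and similarly $\mathcal{Z}_k$; denote by $\hat{x}_{k|k}$ the projection of $x_k$ onto $\mathcal{Z}_k$ and by $P_{k|k}$ its error covariance.

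For the measurement update, I would invoke Theorem \ref{thm:update} with $\beta\leftarrow x_k$, $y_1$ spanning $\mathcal{Z}_{k-1}$ and $y_2\leftarrow z_k$. Since $w_k$ is zero mean and uncorrelated with every past $z_j$, $j<k$ (each such $z_j$ is a linear function of $x_0,u_0,\ldots,u_{j-1},w_j$, all uncorrelated with $w_k$), Theorem \ref{th:Tbeta} yields $\hat{z}_{k|k-1}=H_k\hat{x}_{k|k-1}$, so the innovation is $\tilde{z}_k = z_k-H_k\hat{x}_{k|k-1} = H_k(x_k-\hat{x}_{k|k-1})+w_k$. Expanding $E[x_k\tilde{z}_k^T]$ and $E[\tilde{z}_k\tilde{z}_k^T]$ using $E[x_k w_k^T]=0$ and the orthogonality of $\hat{x}_{k|k-1}$ to $x_k-\hat{x}_{k|k-1}$ gives $E[x_k\tilde{z}_k^T]=P_kH_k^T$ and $E[\tilde{z}_k\tilde{z}_k^T]=H_kP_kH_k^T+R_k$. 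Theorem \ref{thm:update} then produces
\[ \hat{x}_{k|k} = \hat{x}_{k|k-1} + P_kH_k^T(H_kP_kH_k^T+R_k)^{-1}\tilde{z}_k, \]
and the same calculation as in the Section \ref{sec:Example} example gives the updated error covariance $P_{k|k} = P_k - P_kH_k^T(H_kP_kH_k^T+R_k)^{-1}H_kP_k$.

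For the time update, I would use that $u_k$ is orthogonal to every element of $\mathcal{Z}_k$: $u_k$ is uncorrelated with $x_0, u_0,\ldots,u_{k-1}, w_0,\ldots,w_k$, and each $z_j$ with $j\le k$ is a linear function of these, so the projection of $u_k$ onto $\mathcal{Z}_k$ is zero. Linearity of projection together with Theorem \ref{th:Tbeta} then gives $\hat{x}_{k+1|k} = \Phi_k\hat{x}_{k|k}$, which after substituting the measurement update is precisely equation \ref{eq:1Kamlan}. For the covariance, writing $x_{k+1}-\hat{x}_{k+1|k} = \Phi_k(x_k-\hat{x}_{k|k})+u_k$ and noting that $x_k-\hat{x}_{k|k}$ is a linear combination of $x_0, u_0,\ldots,u_{k-1}, w_0,\ldots,w_k$ (all uncorrelated with $u_k$), the cross terms vanish in expectation, so $P_{k+1}=\Phi_k P_{k|k}\Phi_k^T+Q_k$; factoring $P_k$ out of the bracket reproduces equation \ref{eq:2Kalman}. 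The base case is supplied by the given initial data $\hat{x}_{0|-1}=\hat{x}_0$ with covariance $P_0$.

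The main obstacle I anticipate is the orthogonality bookkeeping: verifying cleanly at each invocation of Theorem \ref{th:Tbeta} that the relevant noise term is uncorrelated with the entire conditioning subspace $\mathcal{Z}_k$. This reduces to unrolling $x_k$ and each $z_j$ as explicit linear functions of the primitive random variables $x_0,u_0,\ldots,u_{k-1},w_0,\ldots,w_k$ and using the assumed white-noise structure, together with the induction hypothesis to conclude that $\hat{x}_{k|k-1}\in\mathcal{Z}_{k-1}$. Once these independence facts are in hand, the estimate and covariance recursions are mechanical applications of the orthogonal projection results already established in the preceding sections.
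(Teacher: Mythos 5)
Your proposal is correct and follows essentially the same route as the paper: a measurement update obtained from the orthogonal-projection updating theorem (Theorem \ref{thm:update}) with innovation $\tilde{z}_k=z_k-H_k\hat{x}_{k|k-1}$, followed by a time update via Theorem \ref{th:Tbeta} and the uncorrelatedness of $u_k$ with $\mathcal{Z}_k$. The only cosmetic difference is that you compute $E[x_k\tilde{z}_k^T]=P_kH_k^T$ and $E[\tilde{z}_k\tilde{z}_k^T]=H_kP_kH_k^T+R_k$ directly from the orthogonality of the projection error, whereas the paper imports these from its worked example in Section \ref{sec:Example}.
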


\begin{proof}
	Suppose that $z_{0},z_{1},\ldots,z_{k - 1}$ have been measured and 
	that the estimate $\hat{x}_{k | k - 1}$ and error covariance  $P_{k} = 
	E[(\hat{x}_{k | k - 1} - x_{k})(\hat{x}_{k | k - 1} - x_{k})^T]$ have been computed. In other 
	words, we have the projection of $x_{k}$ onto the subspace $\mathcal{Z}_{k - 1}$. 
	
	At  $k$, 
	we obtain a new measurement 
	\[z_{k} = H_{k}x_{k} + w_{k} \]
	which gives us additional information about  $x_{k}$. 
	This is exactly the situation encountered in the aforementioned example of Section \ref{sec:Example}. 
	Specifically, by substituting $\hat{\hat{\beta}}=\hat{x}_{k|k}$, $ \hat{\beta}=\hat{x}_{k|k-1} $, $ \mathfrak{R}=P_{k} $, $ W=H_{k} $, $ Q=R_{k} $, $ y=z_{k} $ in    
	the previous example,
	the updated estimate of $x_{k}$ is
	\begin{align} \hat{x}_{k|k}&=\hat{x}_{k|k-1}+P_{k}H^T_{k}[H_{k}P_{k}H^T_{k}+R_{k}]^{-1}(z_{k}-H_{k}\hat{x}_{k|k-1})\nonumber 
	\end{align}
	
	with associated error covariance 
	\begin{align}
	P_{k|k}&=P_{k}-P_{k}H^T_{k}[H_{k}P_{k}H^T_{k}+R_{k}]^{-1}H_{k}P_{k}\nonumber\\
	&=P_{k}\lbrace I-H^T_{k}[H_{k}P_{k}H^T_{k}+R_{k}]^{-1} H_{k}P_{k},\rbrace\nonumber
	\end{align}
	where $\mathfrak{R}=P_{k}$ and $\mathfrak{R}-\mathfrak{R}W^T(W\mathfrak{R}W^T+Q)^{-1}W\mathfrak{R}=P_{k|k}$.
	
	Based on this optimal estimate of $x_{k}$, we can  compute the optimal 
	estimate $\hat{x}_{k + 1 | k}$ of $x_{k + 1} = \Phi_{k}\hat{x}_{k} + u_{k}$.
	We can do this  using Theorem \ref{th:Tbeta},which says that the optimal 
	estimate of $\Phi_{k}x_{k}$ is $\Phi_{k}\hat{x}_{k|k}$, and since $u_{k}$ is  uncorrelated with $z_{k}$ and $x_{k}$, the optimal estimate of $x_{k + 1}$ is 
	\begin{align}
	\hat{x}_{k+1|k}&=\Phi_{k}\hat{x}_{k|k}=\Phi_{k}\hat{x}_{k|k-1}+\Phi_{k}P_{k}H^T_{k}[H_{k}P_{k}H^T_{k}+R_{k}]^{-1}(z_{k}-H_{k}\hat{x}_{k|k-1}).\nonumber
	\end{align}
	This proves equation \ref{eq:1Kamlan}.
	
	To prove the error covariance update equation  \ref{eq:2Kalman}, we first note  that from Theorem \ref{th:Tbeta} we have
	\begin{align}
	E[(T\hat{\beta}-T\beta)(T\hat{\beta}-T\beta)^T]&=E[T(\hat{\beta}-\beta)(T(\hat{\beta}-\beta))^T]\nonumber\\
	=E[T(\hat{\beta}-\beta)(\hat{\beta}-\beta)^TT^T]&=TE[(\hat{\beta}-\beta)(\hat{\beta}-\beta)^T]T^T.\nonumber
	\end{align}
	The error covariance update $P_{k+1}$ is now
	\begin{align}
	&P_{k+1}=\nonumber\\
	&=E[(\hat{x}_{k+1|k}-x_{k+1})(\hat{x}_{k+1|k}-x_{k+1})^T]\nonumber\\
	&=E[(\Phi_{k}\hat{x}_{k|k}-(\Phi_{k}x_{k}+u_{k}))(\Phi_{k}\hat{x}_{k|k}-(\Phi_{k}x_{k}+u_{k}))^T]\nonumber\\
	&=E[(\Phi_{k}\hat{x}_{k|k}-\Phi_{k}x_{k})(\Phi_{k}\hat{x}_{k|k}-\Phi_{k}x_{k})^T]\nonumber\\
	&\quad-E[(\Phi_{k}\hat{x}_{k|k}-\Phi_{k}x_{k})u^T_{k}]-E[u_{k}(\Phi_{k}\hat{x}_{k|k}-\Phi_{k}x_{k})^T]\nonumber\\
	&\quad+E[u_{k}u^T_{k}].\nonumber
	\end{align}
	Since the error $u_{k}$ is uncorrelated with previous estimates 
	\[
	E[(\Phi_{k}\hat{x}_{k|k}-\Phi_{k}x_{k})u^T_{k}]=E[u_{k}(\Phi_{k}\hat{x}_{k|k}-\Phi_{k}x_{k})^T]=0.
	\]
	Also, we know that $E[u_{k}u^T_{k}]=Q_{k}$, therefore 
	\begin{align}
	P_{k+1}&=\Phi_{k}E[(\hat{x}_{k|k}-x_{k})(\hat{x}_{k|k}-x_{k})^T]\Phi^T_{k}+Q_{k}\nonumber\\
	&=\Phi_{k}P_{k|k}\Phi^T_{k}+ Q_{k}\nonumber\\
	&=\Phi_{k}P_{k}\lbrace I-H^T_{k}[H_{k}P_{k}H^T_{k}+R_{k}]^{-1} H_{k}P_{k}\rbrace\Phi^T_{k}+Q_{k}.\nonumber
	\end{align}

	%
	%
	
\end{proof}

\part{Bayesian Optimal Filtering }
\section{General Case}
From a Bayesian perspective, filtering means to quantify a degree of belief in the state $x_k$ at time $k$, given all the data up to time $k$ ($Z_k$) in a recursive (sequential) manner. i.e. to construct the posterior $P(x_k|Z_k)$. We do this in 2 steps:
\begin{enumerate}
	\item Prediction: uses the state model to predict the belief of state at time $k$, using $Z_{k-1}$.
	\item Update: At time $k$ when measurement $z_k$ becomes available, we will update the prediction.
\end{enumerate}
In the step of prediction we have a previous belief $P(x_{k-1}|Z_{k-1})$ and we want to know what can be predicted about $x_{k}$ i.e. we want to find $P(x_{k}|Z_{k-1})$.
We use the Chapman-Kolmograov equation:
\begin{align}
P(x_{k}|Z_{k-1})&=\frac{P(x_{k},Z_{k-1})}{P(Z_{k-1})}=\frac{\int p(x_{k},x_{k-1},Z_{k-1})dx_{k-1}}{P(Z_{k-1})}\nonumber\\
&=\frac{\int p(x_{k}|x_{k-1},Z_{k-1})p(x_{k-1}|Z_{k-1})P(Z_{k-1})dx_{k-1}}{P(Z_{k-1})}\nonumber\\
&=\int p(x_{k}|x_{k-1},Z_{k-1})p(x_{k-1}|Z_{k-1})dx_{k-1}\nonumber\\
&=\int p(x_{k}|x_{k-1})p(x_{k-1}|Z_{k-1})dx_{k-1}\nonumber
\end{align}
Assuming the state at time $k$ is only dependent on the state at time $k-1$ and is independent of the observation history $Z_{k-1}$ when $x_{k-1}$ is given. In the above equation $p(x_{k}|x_{k-1})$ is derived from the state equation. 

The step of update uses new measurement $z_{k}$ to construct the posterior $P(x_k|Z_{k})$. The update or corrector is carried out via the Bayes rule.
\begin{align}
P(x_{k}|Z_{k})&=P(x_k|z_{k},Z_{k-1})=\frac{P(x_k,z_{k}|Z_{k-1})}{P(z_{k}|Z_{k-1})}=\nonumber\\
&=\frac{P(z_{k}|x_k,Z_{k-1})P(x_k|Z_{k-1})}{P(z_{k}|Z_{k-1})}\nonumber
\end{align}
Assuming that new measurement $z_k$ is independent of the previous measurements $Z_{k-1}$ we may find the update or corrector:
\[P(x_{k}|Z_{k})=\frac{P(z_{k}|x_k)P(x_k|Z_{k-1})}{P(z_{k}|Z_{k-1})}\]
$P(z_{k}|Z_{k-1})$ can be calculated as follows:
\begin{align}
P(z_{k}|Z_{k-1})&=\int p(z_k,x_k|Z_{k-1}dx_k)=\int p(z_k|x_k,Z_{k-1})p(x_k|Z_{k-1})dx_k\nonumber\\
&=\int p(z_k|x_k)p(x_k|Z_{k-1})dx_k\nonumber.
\end{align}
In the above equation $p(z_k|x_k)$ is the likelihood function (likelihood of data $z_k$ given the state $x_k$) which can be found from the measurement equation.

Once the posterior is found, the estimate of the state can be found using the mean or mode of the posterior. 

For MMSE this estimate is defined by:
\[\hat{x}_{k|k}=E[x_k|Z_k]=\int x_kp(x_k|Z_k)dx_k\qquad (\text{MMSE})\]

For MAP the estimate is given by:
\[ \begin{array}{ccc}
\hat{x}_{k|k}= &\text{argmax} & p(x_k|Z_k) \qquad (\text{MAP})\\
& x_k & 
\end{array}\]
\begin{figure}[h]\label{Fig:probability}
	\centering
	\includegraphics[scale=0.3]{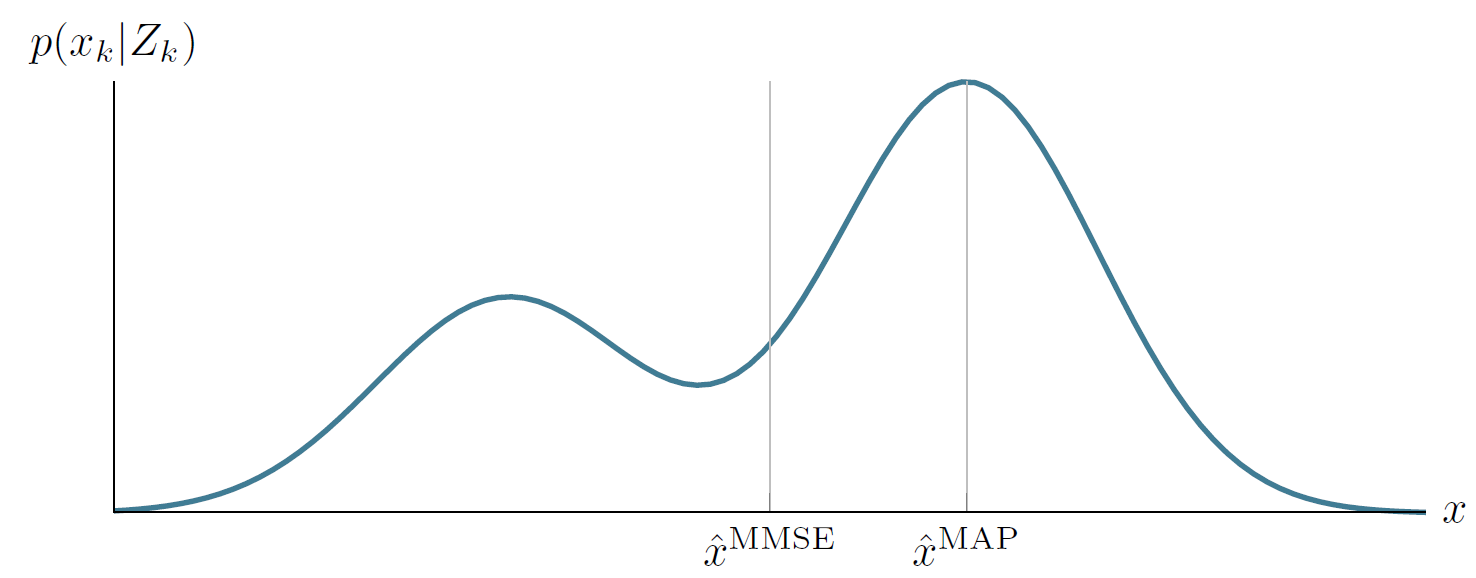}
\end{figure}

\section{Kalman Filtering}
The predictor and corrector steps can not be performed analytically for every arbitrary functions $\phi_{k-1}$, $h_k$. However it has a close-form solution for the most simple form of linear/Gaussian (Kalman filter) model.

Sequential Bayesian equations are obtained from the previous discussion 
\[\left\lbrace\begin{array}{llr}
P(x_{k}|Z_{k-1})&=\int p(x_{k}|x_{k-1})p(x_{k-1}|Z_{k-1})dx_{k-1}&\text{prediction}\\
P(x_{k}|Z_{k})&=\dfrac{P(z_{k}|x_k)P(x_k|Z_{k-1})}{\int p(z_k|x_k)p(x_k|Z_{k-1})dx_k}&\text{update/correction}
\end{array}\right.\]

\textbf{Prediction}

In the above equations $p(x_{k-1}|Z_{k-1})$ corresponds to the state $x_{k-1}$ up to time $k-1$, under the Kalman filter settings, this density turns out to be Gaussian
\[x_{k-1}|Z_{k-1}\sim\mathcal{G}(x_{k-1},\hat{x}_{k-1|k-1},P_{k-1|k-1})\]
where $\hat{x}_{k-1|k-1}$,$P_{k-1|k-1}$ can be found from previous steps and going back to $x_0$.

$p(x_k|x_{k-1})$ can be found from the state equation
\[p(x_k|x_{k-1})=\mathcal{G}(x_k,\Phi_{k-1}x_{k-1},Q_{k-1})\quad\text{or}\quad x_k|x_{k-1}\sim\mathcal{G}(x_k,\Phi_{k-1}x_{k-1},Q_{k-1})\]

Now we can put the above equations into the prediction equation
\begin{align}
P(x_{k}|Z_{k-1})&=\int \mathcal{G}(x_k,\Phi_{k-1}x_{k-1},Q_{k-1})\mathcal{G}(x_{k-1},\hat{x}_{k-1|k-1},P_{k-1|k-1})dx_{k-1}\nonumber\\
&= \mathcal{G}(x_k,\Phi_{k-1}\hat{x}_{k-1},P_{k|k-1})=\mathcal{G}(x_k,\hat{x}_{k|k-1},P_{k|k-1})\nonumber
\end{align}

Since the integrand is the multiply of 2 Gaussian PDFs, the result of the integral can be computed in form of a Gaussian PDF with
\[\left\lbrace\begin{array}{llr}
\hat{x}_{k|k-1}&=\Phi_{k-1}\hat{x}_{k|k-1}&\text{mean}\\
P_{k|k-1}&=\Phi_{k-1}P_{k-1|k-1}\Phi_{k-1}^T+Q_{k-1}&\text{covariance}
\end{array}\right.\]

Note that without new measurements to do the update(correction) step the covariance grows with time.

\textbf{Correction}

In order to find $P(x_{k}|Z_{k})$, we need to compute $P(z_{k}|x_k)$ and $P(x_k|Z_{k-1})$.  $P(z_{k}|x_k)$ can be found from the measurement equation $z_k=H_kx_k+w_k$ and $P(x_k|Z_{k-1})$ is found in the previous step. We have assumed that  $w_k\sim\mathcal{G}(w_k,0,R_k)$ therefore
\begin{align}
p(z_k|x_k)&=\mathcal{G}(z_k,H_kx_k,R_k)\nonumber\\
P(x_{k}|Z_{k-1})&=\mathcal{G}(x_k,\hat{x}_{k|k-1},P_{k|k-1})\nonumber
\end{align}

Putting the above PDFs in the corrector we obtain:
\begin{align}
P(x_{k}|Z_{k})&=\dfrac{P(z_{k}|x_k)P(x_k|Z_{k-1})}{\int p(z_k|x_k)p(x_k|Z_{k-1})dx_k}\nonumber\\
P(x_{k}|Z_{k})&=\dfrac{\mathcal{G}(z_k,H_kx_k,R_k)\mathcal{G}(x_k,\hat{x}_{k|k-1},P_{k|k-1})}{\int\mathcal{G}(z_k,H_kx_k,R_k)\mathcal{G}(x_k,\hat{x}_{k|k-1},P_{k|k-1})dx_k}\nonumber
\end{align}

In order to find the result of the above relation, first we calculate the nominator.
It can be shown that (See appendix D of \cite{Maher}). 
\[\text{det}(R_k).\text{det}(P_{k|k-1})=\text{det}(R_k+H_kP_{k|k-1}H_k)\]
and
\begin{align}
&(z_k-H_kx_k)^TR_k^{-1}(z_k-H_kx_k)+(x_k-\hat{x}_{k|k-1})^TP_{k|k-1}^{-1}(x_k-\hat{x}_{k|k-1})=\nonumber\\
&(z_k-H_k\hat{x}_{k|k-1})^T(R_k+H_kP_{k|k-1}H_k)^{-1}(z_k-H_k\hat{x}_{k|k-1})+\nonumber\\
&(x_k-\hat{x}_{k|k})^T(P_{k|k-1}+H_k^TR_k^{-1}H_k)(x_k-\hat{x}_{k|k})\nonumber
\end{align}
Then we write
\begin{align}
&\mathcal{G}(z_k,H_kx_k,R_k)\mathcal{G}(x_k,\hat{x}_{k|k-1},P_{k|k-1})=\qquad\qquad\qquad\qquad\qquad\qquad\nonumber\\
&\frac{1}{\sqrt{\text{det}(2\pi R_k)}}e^{-\frac{1}{2}(z_k-H_kx_k)^TR_k^{-1}(z_k-H_kx_k)}\frac{1}{\sqrt{\text{det}(2\pi P_{k|k-1})}}e^{-\frac{1}{2}(x_k-\hat{x}_{k|k-1})^TP_{k|k-1}^{-1}(x_k-\hat{x}_{k|k-1})}=\nonumber\\
&\frac{1}{2\pi\sqrt{ \text{det}(R_k).\text{det}(P_{k|k-1})} }e^{-\frac{1}{2}\left[(z_k-H_kx_k)^TR_k^{-1}(z_k-H_kx_k)+(x_k-\hat{x}_{k|k-1})^TP_{k|k-1}^{-1}(x_k-\hat{x}_{k|k-1})\right]}=\nonumber\\
&\frac{1}{\sqrt{\text{det}(2\pi(R_k+H_kP_{k|k-1}H_k))} }e^{-\frac{1}{2}(z_k-H_k\hat{x}_{k|k-1})^T(R_k+H_kP_{k|k-1}H_k)^{-1}(z_k-H_k\hat{x}_{k|k-1})}\nonumber\\
&\times \frac{1}{\sqrt{\text{det}(2\pi(P_{k|k-1}+H_k^TR_k^{-1}H_k)^{-1}})}e^{-\frac{1}{2}(x_k-\hat{x}_{k|k})^T(P_{k|k-1}+H_k^TR_k^{-1}H_k)(x_k-\hat{x}_{k|k})}\nonumber
\end{align}
Thus the nominator is
\begin{align}
&\mathcal{G}(z_k,H_kx_k,R_k)\mathcal{G}(x_k,\hat{x}_{k|k-1},P_{k|k-1})=\nonumber\\
&\mathcal{G}(z_k,H_k\hat{x}_{k|k-1},R_k+H_kP_{k|k-1}H_k^T)\mathcal{G}(x_k,\hat{x}_{k|k-1},(P_{k|k-1}+H_k^TR_k^{-1}H_k)^{-1})\nonumber
\end{align}
Secondly integrating the above relationship over $x_k$ gives the denominator
\begin{align}
&\int\mathcal{G}(z_k,H_kx_k,R_k)\mathcal{G}(x_k,\hat{x}_{k|k-1},P_{k|k-1})dx_k=\nonumber\\
&\mathcal{G}(z_k,H_k\hat{x}_{k|k-1},R_k+H_kP_{k|k-1}H_k^T)\underbrace{\int\mathcal{G}(x_k,\hat{x}_{k|k-1},(P_{k|k-1}+H_k^TR_k^{-1}H_k)^{-1})dx_k}_{1}=\nonumber\\
&\mathcal{G}(z_k,H_k\hat{x}_{k|k-1},R_k+H_kP_{k|k-1}H_k^T)\nonumber
\end{align}
substituting in the main formula gives the updated PDF. Here the covariance of the updated PDF is defined as $P_{k|k}^{-1}\triangleq P_{k|k-1}^{-1}+H_k^TR_k^{-1}H_k$
\begin{align}
P(x_{k}|Z_{k})&=\dfrac{\mathcal{G}(z_k,H_k\hat{x}_{k|k-1},R_k+H_kP_{k|k-1}H_k^T)\mathcal{G}(x_k,\hat{x}_{k|k},P_{k|k})}{\mathcal{G}(z_k,H_k\hat{x}_{k|k-1},R_k+H_kP_{k|k-1}H_k^T)\underbrace{\int\mathcal{G}(x_k,\hat{x}_{k|k-1},(P_{k|k-1}+H_k^TR_k^{-1}H_k)^{-1})dx_k}_{1}}\nonumber\\
P(x_{k}|Z_{k})&=\mathcal{G}(x_k,\hat{x}_{k|k},P_{k|k})\nonumber
\end{align}
Finally we must find $\hat{x}_{k|k}$ which is the updated estimate.
By definition we have 
\[P_{k|k}^{-1}\hat{x}_{k|k}=P_{k|k-1}^{-1}\hat{x}_{k|k-1}+H_k^TR_k^{-1}z_k\]
To obtain  $\hat{x}_{k|k}$ we find $P_{k|k}$ using the matrix inversion lemma\footnote{$(A+UCV)^{-1}=A^{-1}-A^{-1}U(C^{-1}+VA^{-1}U)^{-1}VA^{-1}$}
\begin{align}
P_{k|k}&={\left[P_{k|k}^{-1}\right]}^{-1}=(P_{k|k-1}^{-1}+H_k^TR_k^{-1}H_k)^{-1}\nonumber\\
&=P_{k|k-1}-P_{k|k-1}H_k^T(R_k+H_kP_{k|k-1}H_k^T)^{-1}H_kP_{k|k-1}\nonumber\\
&=\left(I-P_{k|k-1}H_k^T(R_k+H_kP_{k|k-1}H_k^T)^{-1}H_k\right)P_{k|k-1}\nonumber\\
&=(I-K_kH_k)P_{k|k-1},\quad K_k\triangleq P_{k|k-1}H_k^T(R_k+H_kP_{k|k-1}H_k^T)^{-1}\nonumber
\end{align}
Then by multiplying $P_{k|k}$ and $P_{k|k}^{-1}\hat{x}_{k|k}$ we get
\begin{align}
P_{k|k}P_{k|k}^{-1}\hat{x}_{k|k}&=(I-K_kH_k)P_{k|k-1}\left((P_{k|k-1}^{-1}\hat{x}_{k|k-1}+H_k^TR_k^{-1}z_k)\right)\nonumber\\
\hat{x}_{k|k}&=(I-K_kH_k)\hat{x}_{k|k-1}+(I-K_kH_k)P_{k|k-1}H_k^TR_k^{-1}z_k\nonumber\\
&=\hat{x}_{k|k-1}-K_kH_k\hat{x}_{k|k-1}+P_{k|k-1}H_k^TR_k^{-1}z_k-K_kH_kP_{k|k-1}H_k^TR_k^{-1}z_k\nonumber\\
&=\hat{x}_{k|k-1}+(P_{k|k-1}H_k^T(R_k+H_kP_{k|k-1}H_k^T)^{-1}(R_k+H_kP_{k|k-1}H_k^T)R_k^{-1}\nonumber\\
&-K_kH_kP_{k|k-1}H_k^TR_k^{-1})z_k-K_kH_k\hat{x}_{k|k-1}\nonumber\\
&=\hat{x}_{k|k-1}+(K_k(I+H_kP_{k|k-1}H_k^TR_k^{-1})-K_kH_kP_{k|k-1}H_k^TR_k^{-1})z_k-K_kH_k\hat{x}_{k|k-1}\nonumber\\
&=\hat{x}_{k|k-1}+(K_k+K_kH_kP_{k|k-1}H_k^TR_k^{-1}-K_kH_kP_{k|k-1}H_k^TR_k^{-1})z_k-K_kH_k\hat{x}_{k|k-1}\nonumber\\
&=\hat{x}_{k|k-1}+K_k(z_k-H_k\hat{x}_{k|k-1})\nonumber
\end{align}
\subsection*{Summary}
In summery the predictor equations are:
\[\left\lbrace\begin{array}{llr}
\hat{x}_{k|k-1}&=\Phi_{k-1}\hat{x}_{k|k-1}&\text{mean}\\
P_{k|k-1}&=\Phi_{k-1}P_{k-1|k-1}\Phi_{k-1}^T+Q_{k-1}&\text{covariance}
\end{array}\right.\]
And the update equations 
\[\left\lbrace\begin{array}{llr}
\hat{x}_{k|k}&=\hat{x}_{k|k-1}+K_k(z_k-H_k\hat{x}_{k|k-1})\quad\\
P_{k|k}&=(I-K_kH_k)P_{k|k-1} \\
K_k&= P_{k|k-1}H_k^T(R_k+H_kP_{k|k-1}H_k^T)^{-1}\quad\text{Kalman gain}
\end{array}\right.\]

In order to obtain the results of section \ref{sec:vectorkalman} we put the update equations in the predictor to find the prediction of the next state\footnote{In section \ref{sec:vectorkalman} the notation $P_{k+1}$ represents $P_{k+1|k}$ and is used for simplicity.}
\[\left\lbrace\begin{array}{llr}
\hat{x}_{k+1|k}&=\Phi_{k}\hat{x}_{k|k-1}+\Phi_{k}P_{k|k-1}H_{k}[H_{k}P_{k|k-1}H'_{k}+R_{k}]^{-1}(z_{k}-H_{k}\hat{x}_{k|k-1})\\
P_{k+1|k}&=\Phi_{k}P_{k}\lbrace I-H'_{k}[H_{k}P_{k|k-1}H'_{k}+R_{k}]^{-1}H_{k}P_{k|k-1}\rbrace\Phi'_{k}+Q_{k}
\end{array}\right.\]

\end{document}